\theoremstyle{plain}
\newtheorem{theorem}{Theorem}[section]
\newtheorem{lemma}[theorem]{Lemma}
\newtheorem{proposition}{Proposition}
\theoremstyle{definition}
\newtheorem{example}{Example}
\theoremstyle{remark}
\DeclareMathOperator*{\argmax}{arg\,max}
\newcommand{\diff}{\mathop{}\!\mathrm{d}}
\newcommand{\cvcs}{\text{CV}_{\text{cs}}}
\newcommand{\cvws}{\text{CV}_{\text{ws}}}
\newcommand{\bzero}{\mathbf{0}}
\newcommand{\bone}{\mathbf{1}}
\newcommand{\bbeta}{\bm{\beta}}
\newcommand{\var}{\text{var}}
\newcommand{\tdk}{\widetilde D_k}
\newcommand{\drw}{\hat w^{\text{DR}}}
\newcommand{\wsw}{\hat w^{\text{WS}}}
\newcommand{\csw}{\hat w^{\text{CS}}}
\newcommand{\drU}{\hat U^{\text{DR}}}
\newcommand{\wsU}{\hat U^{\text{WS}}}
\newcommand{\csU}{\hat U^{\text{CS}}}
\newcommand{\drY}{\hat Y^{\text{DR}}}
\newcommand{\csY}{\hat Y^{\text{CS}}}
\newcommand{\wsY}{\hat Y^{\text{WS}}}
\newcommand{\drS}{\hat \Sigma^{\text{DR}}}
\newcommand{\wsS}{\hat \Sigma^{\text{WS}}}
\newcommand{\csS}{\hat \Sigma^{\text{CS}}}
\newcommand{\drbb}{\hat b^{\text{DR}}}
\newcommand{\wsb}{\hat b^{\text{WS}}}
\newcommand{\csb}{\hat b^{\text{CS}}}
\newcommand{\drp}{\hat \psi^{\text{DR}}}
\newcommand{\csp}{\hat \psi^{\text{CS}}}
\title{Cross-validation Approaches for Multi-study Predictions}
\author[1,2]{Boyu Ren}
\author[3]{Prasad Patil}
\author[4]{Francesca Dominici}
\author[4,5]{Giovanni Parmigiani}
\author[4,5]{Lorenzo Trippa}
\affil[1]{Laboratory for Psychiatric Biostatistics, McLean Hospital}
\affil[2]{Department of Psychiatry, Harvard Medical School}
\affil[3]{Department of Biostatistics, Boston University School of Public Health}
\affil[4]{Department of Biostatistics, Harvard T.H. Chan School of Public Health}
\affil[5]{Department of Data Sciences, Dana-Farber Cancer Institute}
\date{}
\begin{document}
	
	\maketitle
	
	\begin{abstract}
		We consider prediction in multiple studies with potential differences in the relationships between predictors and outcomes. Our objective is to integrate data from multiple studies to develop prediction models for unseen studies. We propose and investigate two cross-validation approaches applicable to {\it multi-study stacking}, an ensemble method that linearly combines study-specific ensemble members to produce generalizable predictions. Among our cross-validation approaches are some that avoid reuse of the same data in both the training and stacking steps, as done in earlier multi-study stacking. We prove that under mild regularity conditions the proposed cross-validation approaches produce stacked prediction functions with oracle properties. We also identify analytically in which scenarios  the proposed  cross-validation approaches increase  prediction accuracy  compared to stacking with data reuse. We perform a simulation study to  illustrate these results. Finally, we apply our method to predicting mortality from long-term exposure to air pollutants, using collections of datasets.
	\end{abstract}
	
	\newpage

\section{Introduction}

It is increasingly common for researchers to have access to multiple studies and datasets to address the same or closely related prediction questions \citep{kannan2016public,manzoni2018genome}. 
Although datasets from $K>1$ studies may contain the same outcome variable $Y$ (e.g., patient survival) and predictors $X$ (e.g., pre-treatment prognostic profiles in clinical studies), the joint distributions $P_1, \ldots, P_K$ of $(X, Y)$ across these $K$ studies are typically different, owing to distinct study populations, study designs and study-specific technological artifacts \citep{patil2015test,sinha2017assessment}. In this article, we consider building \textit{prediction functions} (PFs) for future unseen studies using multiple datasets, while accounting for potential differences in the study-specific distributions $P_1, \ldots, P_K$. We consider $P_k$'s to be random objects and the heterogeneity across them to be described by a hyper-distribution of distributions.

We use the term generalist, as in \cite{bernau2014cross}, to indicate that we wish to make prediction in studies $K+1, K+2, \ldots$ that are not included in the training data collection. Strategies to develop a generalist PF depend on relations and similarities between studies. If studies can be considered exchangeable, i.e. $P_k$ can be assumed to be invariant to permutations of the study indices $1, \ldots, K$, then a model that consistently predicts accurately across all $K$ training studies is a good candidate for generalist predictions in unseen studies $k > K$. Generalist PFs have been studied from meta-analytic perspectives \citep{tseng2012comprehensive,pasolli2016machine} and using hierarchical models \citep{ventz2020integration,rashid2020modeling}, often assuming study exchangeability. If the exchangeability assumption is inadequate, joint models for multiple studies can incorporate information on relevant relations between studies \citep{moreno2012unifying}. For example, when $K$ datasets are representative of different time points, 
one can incorporate potential cycles or long-term trends.

The goal of generalist prediction is akin to that of domain generalization (DG) \citep{wang2021generalizing} in machine learning, if each study can be thought of as a domain, that is, a factor known to substantially affect the data generating distribution $P_k$. One of the goals of DG is to leverage the information of all $K$ observed domains to construct a PF that can generalize to an unseen target domain. Some algorithms in DG use representation learning to identify transformations of $X$ to serve as the predicting features for better generalizability. One strategy, domain invariant representation learning, finds such transformations of $X$ whose distributions are invariant across domains \citep{shao2019multi,deng2020representation}. Another, feature disentanglement, relaxes the assumption that invariant representations exist and instead learns transformations of $X$ that later are identified as domain-shared and domain-specific features. This strategy uses only the domain-shared features for predictions while minimizing the information loss of ignoring the domain-specific features \citep{ilse2020diva,wang2021variational}.

Recently, ensemble methods have been proposed for DG \citep{nozza2016deep,ding2017deep,zhou2020domain}, as well as multi-study learning \citep{patil2018training,Ramchandran:2019ga}. An important building block for these methods is stacking \citep{wolpert1992stacked,breiman1996stacked}. In a single study setting, stacking combines an ensemble of PFs, each trained in a subset of the available data, into a single PF. The weights assigned to each model are often selected by maximizing a utility function representative of the accuracy of the combined PF. Stacking is computationally efficient and finds the optimal PF within the convex hull of the ensemble \citep{juditsky2008learning,yao2018using}. When PFs are derived from flexible machine learning algorithms, their convex hull contains a large class of models, 
and hence stacking is more likely to recover the true data generating mechanism \citep{yao2018using}.

\cite{patil2018training} extended stacking to the multi-study setting via two-stage algorithms that, in the first stage, train learners on individual studies to generate a collection of single-study prediction functions (SPFs) with arbitrary machine learning methods. Then, at the second stage, they select ensemble weights using stacking on a merged dataset including all labels and all predictions. This rewards SPFs that do well outside their training studies. This approach does not require exchangeability, and the derivation of the ensemble weights can be tailored to settings where exchangeability is implausible. \cite{Loewinger856385} further extended this method to studies generated by resampling of the $K$ training datasets. In this article we formulate the second stage of the multi-study stacking approach as a formal optimization wherein stacking weights approximately maximize an expected utility function. The expectation is estimated using the entire collection of $K$ training studies. Our contribution is two-fold. 

First, we introduce cross-validation (CV) procedures specific for the multi-study settings, with the aim to mitigate the potential over-fitting associated with data reuse (DR) in two-stage stacking. This issue arises from the partial overlap of the data points used to train each SPF and to estimate stacking weights. Depending on how a multi-study dataset is partitioned into different folds for training SPFs and estimating weights, we propose two CV procedures: within-study CV ($\cvws$) and cross-study CV ($\cvcs$). $\cvws$ uses a subset of the data in each study to form a fold while $\cvcs$ treats each study as a fold. This approach shares similarities with the multi-source CV strategy discussed in \cite{geras}, but with a key difference: at each iteration, $\cvcs$ evaluate the out-of-study prediction accuracy of a weighted combination of SPFs, instead of a re-trained PF based on the aggregated training studies, to explicitly accommodates the utility function defined for multi-study stacking. We characterize the behavior of $\cvws$ and $\cvcs$ when applied for generalist predictions, with particular focus on the bias and variance of the estimated expected utilities \citep{bengio2003no}.

Second, we derive asymptotic results and provide empirical comparisons between the multi-study stacking PFs trained with DR and CV procedures. We evaluate these procedures using mean squared error (MSE) of prediction. Ours is the first theoretical investigation of the two-stage stacking with CV in multi-study settings. Our results supplement generalization error bounds for DG \citep{blanchard2021domain} and show that, when the number of  studies $K$ and the sample sizes $n_k$ become large, both DR and CV stacking achieve a generalization error similar to an asymptotic oracle benchmark. The asymptotic oracle is defined as the linear combination of the  SPFs' limits $\lim\limits_{n_k\to\infty} \hat Y_ k$ for $k=1,\ldots,K$ that  minimizes the MSE in future  studies, that is for $k > K$. Our results bound the MSE difference between the oracle ensemble and two stacking procedures, with and without DR. Related bounds have been studied in the single-study setting \citep{vdl1} and in the functional aggregation literature \citep{juditsky2000functional,juditsky2008learning,transfer-li}.

We apply our DR and CV stacking procedures to predicting pollution-related mortality. We use information on Medicare beneficiaries and measurements of air pollutants at the ZIP code level. We are interested in predicting the number of deaths per 10,000 person-years. In separate analyses, we partitioned the database into studies defined by state and by county. We compare the relative performance of DR and CV stacking.

\section{Generalist predictions}

\subsection{Notation}
We use data from $K$ studies $k = 1, \ldots, K$, with sample sizes $n_k$. For individual $i$ in study $k$ we have a vector of $p$-dimensional characteristics $x_{i,k} \in \mathcal X \subseteq \mathbb R^p$ and the individual outcome $y_{i,k} \in \mathbb R$. Let $\mathcal S_k = \{(x_{i,k},y_{i,k}),i=1,\ldots,n_k\}$ denote all data from study $k$ and $\mathcal S = \{\mathcal S_k, k = 1, \ldots, K\}$ the collection of all $K$ datasets. Extending earlier multi-study architectures, we define a list $\mathcal D$ of training sets, which includes $T$ disjoint members $D_1,\ldots ,D_T$. Each $D_t$ is a set of $(i, k)$ indices, where $i \in \{1, 2, \ldots, n_k\}$ is the sample index within a study $k$. The set $D_t$ can include indices with different $k$ values (see for example $D_2$ and $D_3$ in Figure \ref{fig:D-example}). We call a collection $\mathcal D$ \textit{partitioned by study} if $T = K$ and $D_t=\widetilde D_t = \{(1,t),\ldots,(n_t,t)\}$ for $t=1,\ldots,K$.

\begin{figure}[htbp]
	\centering
	\includegraphics[scale=0.3]{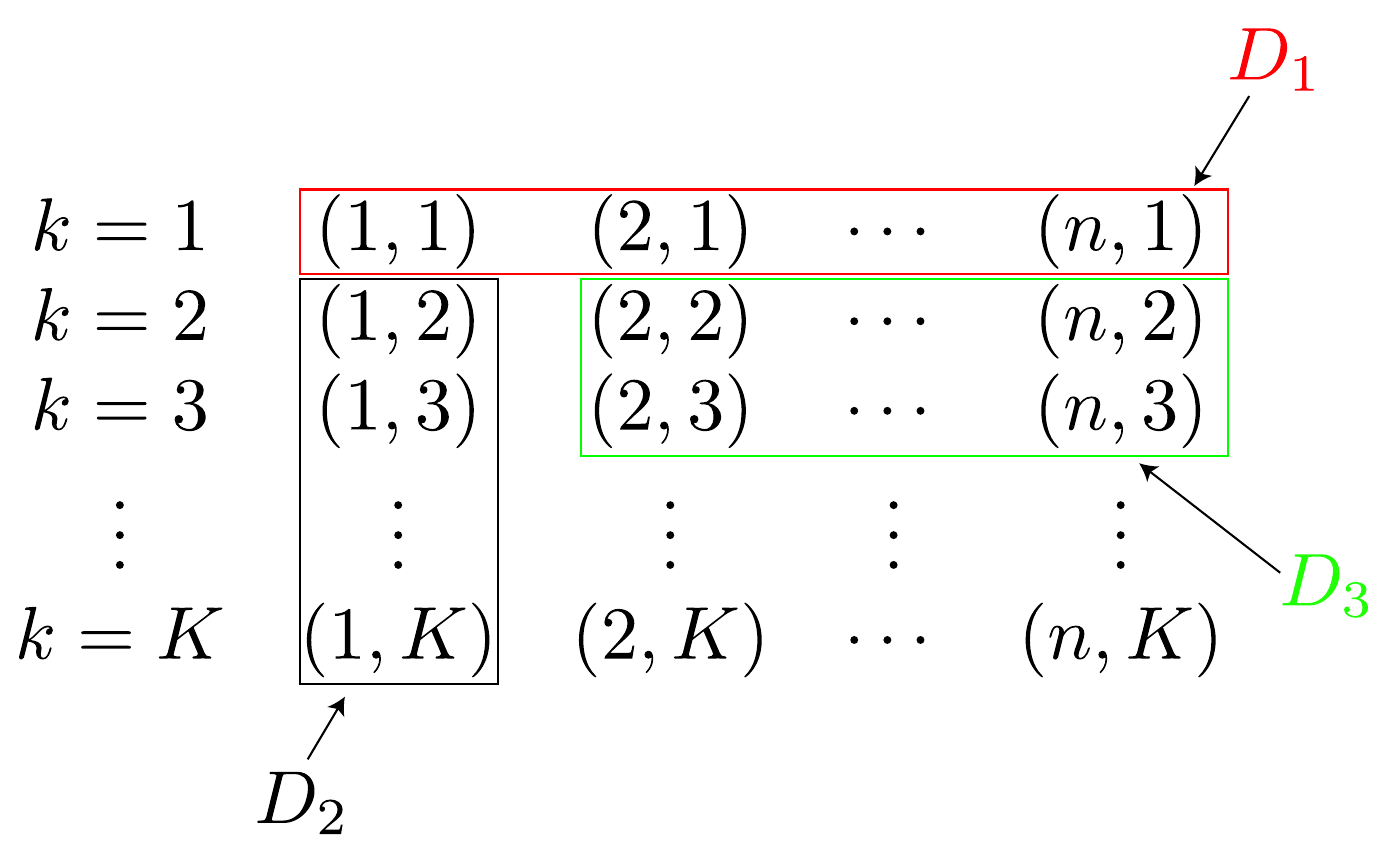}
	\caption{\small \it Illustration of the relation between studies and training sets for $\mathcal D=\{D_1,D_2,D_3\}$ when all studies are of equal size ($n_k=n$, $k = 1,\ldots,K$).}
	\label{fig:D-example}
\end{figure}

We consider $L$ different learners. A learner is a method for generating a PF, such as linear regression, random forests or a neural network. Note that we do not require a learner to have the ability to assess the evidential uncertainty associated with its output PF. Training the learner $\ell$ on $D_t$ generates an SPF denoted as $\hat Y_t^\ell: \mathcal X \to \mathbb R$. The set of  SPFs is $\hat{\mathcal Y} = \{\hat Y_t^\ell(\cdot);\ell=1,\ldots,L,\;t = 1,\ldots,T\}$. Let $W$ be a subset of $\mathbb R^{TL}$, where $\mathbb R^{TL}$ is the $TL$-dimensional Euclidean space. We combine the SPFs in $\hat{\mathcal Y}$ into $\hat Y_w:\mathcal X \to \mathbb R$,
$$
\hat Y_w (\cdot) = \sum_{\ell=1}^L\sum_{t=1}^T w_{\ell, t} \hat Y_t^\ell(\cdot),
$$
where $w = (w_{\ell,t};\ell\leq L, t\leq T)\in W$ are weights trained using multi-study stacking. In the remainder of this manuscript, if not otherwise specified, we suppress the range of the summations and assume that the range for $i$ is $1$ to $n_k$, 1 to $K$ for $k$, $1$ to $T$ for $t$, and $1$ to $L$ for $\ell$.

\subsection{Utility function for generalist prediction}
We want to use $\hat Y_w$ for prediction in a target population with unknown joint distribution $\pi$ on
$(X, Y)$. The objective is to maximize the expected
utility $U$ of $\hat Y_w$ in the target population:
$$
U(w;\pi) = \int_{(x,y)} u(\hat Y_w(x), y) \; d\pi(x,y),
$$
where $u(y, y')$ is a utility function, such as the negative squared difference $u(y, y') = -(y - y')^2$.

Generalist predictions are designed to target \textit{all} future studies $K + 1, K + 2, \ldots$. The corresponding utility function is defined through a limit:
$$
U_g(w) = \lim_{I\to\infty}\frac1I \sum_{i=1}^I U(w;P_{K+i}).
$$
We will only consider scenarios where the above limit is well defined. In particular, if $P_1,P_2,\ldots$ are exchangeable, there exists a fixed distribution $Q$ of random distributions such that $P_k\overset{iid}{\sim} Q$ for $k = 1, 2, \ldots.$ In this case, $U_g(w)$ can be rewritten, for a generic $P \sim Q$,  as
$$
U_g(w) = \int \left[\int_{(x,y)} u(\hat Y_w(x), y)\diff P(x,y)\right]\diff Q.
$$
Changing the order of integration, and defining $P_0(\cdot) = \int P(\cdot)dQ$, 
$$
U_g(w) = U(w;P_0)= \int_{(x,y)} u(\hat Y_w(x), y) \diff P_0(x,y).
$$
Note that when exchangeability across $P_k$ holds, $P_0$ can be estimated with the empirical distribution $\hat P_0(x,y) = K^{-1}\sum_{k,i} \mathbb I(x=x_{i,k},y=y_{i,k})/n_k$. Hence, a direct estimator of $U_g(w)$ is
$$
\drU(w) = \sum_{k} \frac{1}{K}\cdot\frac{1}{n_k}\sum_{i} u(\hat Y_w(x_{i,k}), y_{i,k}),
$$
where DR indicates that the estimate of $U$ involves DR. See Section \ref{sec:stacking} for a discussion of DR.

If the sequence $P_1,P_2\ldots$ is not exchangeable, for example when the sequence satisfies a Markov relation $[P_k|P_1,\ldots,P_{k-1}]\overset{d}{=} [P_k|P_{k-1}]$ as in \cite{shumway2017time}, we can replace $1/K$ by weights $\nu_k$ varying with $k$ (e.g., $\nu_k$ is larger when $k$ gets closer to $K$). Throughout this article we will not specify $Q$, but we will assume exchangeability of $P_1,P_2,\ldots$.

\section{Multi-study Stacking Approaches}
\label{sec:stacking}
In this section, we introduce three approaches to train generalist PFs in the multi-study setting. We begin by defining the {\it oracle} weights as
$$
w_g = \argmax_{w\in W} U_g(w).
$$
In what follows, if not otherwise specified, $u(y,y') = -(y - y')^2$. For the oracle case the generalist prediction error of the PF $\hat Y$ is
\begin{equation}
	\text{MSE}_0{(\hat Y)} = \int_{(x,y)} \left(y - \hat Y(x)\right)^2 dP_0.
	\label{eq:cMSE-def}
\end{equation}


We consider three multi-study stacking approaches: the original multi-study stacking \citep{patil2018training}, which involves data reuse (\textit{DR stacking}, for short) and two novel CV stacking approaches: \textit{within-study cross-validated stacking} ($\cvws$) and \textit{cross-study cross-validated stacking} ($\cvcs$). These three approaches select $w$ by maximizing different estimates of the expected utility $U_g(w)$. We discuss the estimators of $U_g(w)$ underlying these approaches. We study the accuracy of an estimator $\hat U(w)$ by examining its deviation from the truth $\hat U(w) - U_g(w)$ and we focus mainly on the mean $\mathbb E_{\mathcal S}(\hat U(w) - U_g(w))$ and variance $\var_{\mathcal S}(\hat U(w) - U_g(w))$ over different realizations of $\mathcal S$, the data from the entire collection of $K$ studies. We refer to $\mathbb E_{\mathcal S}(\hat U(w) - U_g(w))$ as the \textit{bias} of $\hat U(w)$.

In two examples we illustrate that DR stacking and $\cvws$ tend to over-estimate the utility function $U_g$ ($\mathbb E_{\mathcal S}(\hat U(w) - U_g(w))>0$) and have smaller variance compared to $\cvcs$, which in turn tends to under-estimate $U_g$.

\subsection{DR stacking}
\label{sec:dr-stacking}
A direct approach to selecting $w$ for generalist predictions is to maximize the estimate $\drU(w)$ of $U_g(w)$. Let $\drw = \argmax_{w\in W} \drU(w)$ and $\drY = \sum_{\ell, t} \drw_{\ell,t}\hat Y_t^\ell$.
Training the SPFs $\hat Y^{\ell}_t$ uses data in $D_t$, which in this case is later re-used to compute $\drU(w)$. This could lead to a non-zero bias $\mathbb E_{\mathcal S}(\drU(w) - U_g(w))$ that may also depend on $w$. To see this, note that the difference between $U_g(w)$ and $\drU(w)$ can be expressed as 
$$
\drU(w) - U_g(w) = 
- w^\intercal (\drS - \Sigma) w + 2 w^\intercal (\drbb - b) + C,
$$
where $\mathbb E_{\mathcal S}(C) =0$. Here $\drS$ and $\Sigma$ are $TL\times TL$ matrices and their components corresponding to $w_{t,\ell}\times w_{t',\ell'}$ are
$$
\drS_{t,t',\ell,\ell'} = \sum_{k}\frac{1}{Kn_k}\sum_i\hat Y_{t}^\ell(x_{i,k})\hat Y_{t'}^{\ell'}(x_{i,k}), \text{and}~\Sigma_{t,t',\ell,\ell'} = \langle \hat Y_t^\ell, \hat Y_{t'}^{\ell'}\rangle,
$$
where $\langle\hat Y_t^\ell, \hat Y_{t'}^{\ell'}\rangle = \mathbb E_{x\sim P_0}(\hat Y_t^\ell(x)\hat Y_{t'}^{\ell'}(x)|\mathcal S)$. Also, $\drbb$ and $b$ are $TL$-dimensional vectors:
$$
\drbb_{k,\ell} = \sum_{i,k}\frac{1}{Kn_k}\hat Y^\ell_t(x_{i,k})y_{i,k}, \text{and}~b_{t,\ell} = \langle \hat Y_t^\ell, Y_0 \rangle,
$$
where $\langle \hat Y_t^\ell, Y_0\rangle = \mathbb E_{(x,y)\sim P_0}(y\hat Y_t^\ell(x)|\mathcal S)$. 

Consider $\mathcal D$ partitioned by study. If $k\notin\{t,t'\}$, then
$$
\mathbb E_\mathcal S\left(\hat Y_{t}^\ell(x_{i,k})\hat Y_{t'}^{\ell'}(x_{i,k})|\mathcal S_t,\mathcal S_{t'}\right) = \langle \hat Y_{t}^\ell, \hat Y_{t'}^{\ell'}\rangle, \text{and}~
\mathbb E_\mathcal S\left(\hat Y_{t}^\ell(x_{i,k})y_{i,k}|\mathcal S_t\right) = \langle \hat Y_{t}^\ell, Y_0\rangle.
$$
If $k\in\{t,t'\}$, the first equality does not hold since conditioning on $\mathcal S_t$ and $\mathcal S_{t'}$, $\hat Y_{t}^\ell(x_{i,k})\hat Y_{t'}^{\ell'}(x_{i,k})$ is a constant in the expectation, and this constant depends on $x_{i,k}$. Hence, it cannot be equal to $\langle \hat Y_t^\ell, \hat Y_{t'}^{\ell'}\rangle$, which is independent of $x_{i,k}$. Similarly, if $k = t$, the second equality does not hold. Specifically,
$$
\mathbb E_\mathcal S\left(\drS_{t,t',\ell,\ell'} - \Sigma_{t,t',\ell,\ell'}\big| \mathcal S_t,\mathcal S_{t'}\right) = \sum_{k\in \{t,t'\}} \sum_{i}\frac{\left( \hat Y_{t}^\ell(x_{i,k})\hat Y_{t'}^{\ell'}(x_{i,k}) - \langle \hat Y_{t}^\ell, \hat Y_{t'}^{\ell'}\rangle\right)}{Kn_k},$$ 
and
$$
\mathbb E_\mathcal S\left(\hat b^{\text{DR}}_{t,\ell} - b_{t,\ell}\big|\mathcal S_t \right) = \sum_{i}\frac{1}{Kn_t}\left(\hat Y^\ell_t(x_{i,t})y_{i,t} - \langle \hat Y_t^\ell, Y_0 \rangle\right).
$$
Integrating out $\mathcal S_t$ and $\mathcal S_{t'}$ in the first equation, we have
$$
\mathbb E_\mathcal S\left(\drS_{t,t',\ell,\ell'} - \Sigma_{t,t',\ell,\ell'}\right) = \sum\limits_{k\in \{t,t'\}}\sum_{i}\frac{(\mathbb E( \hat Y_{t}^\ell(x_{i,k})\hat Y_{t'}^{\ell'}(x_{i,k})) - \mathbb E\langle \hat Y_{t}^\ell, \hat Y_{t'}^{\ell'}\rangle)}{Kn_k} .
$$
At least one of $\hat Y_t^\ell$ and $\hat Y_{t'}^{\ell'}$ has been trained with data including $x_{i,k}$. Hence there is no guarantee that the expectation above is zero as it involves a difference between ``in-sample'' estimate $\hat Y_{t}^\ell(x_{i,k})\hat Y_{t'}^{\ell'}(x_{i,k})$ of an ``out-of-sample'' target $\langle \hat Y_{t}^\ell, \hat Y_{t'}^{\ell'}\rangle$. The same result applies to $\drbb$, which implies that $\drU$ is potentially biased.

\begin{example}[No predictors, DR]
	\label{ex:two-y}
	This example illustrates key points using two studies and no predictors. 
	Here the oracle solution favors the SPF with higher generalist prediction accuracy, while DR stacking does not. Also, DR over-estimates $U_g$, and its bias grows with inter-study heterogeneity. 
	We will revisit this example to discuss various stacking procedures throughout the manuscript. See Appendix for detailed derivations.
	
	Consider a $\mathcal D$ partitioned by study with $K=2$. Assume we observe outcomes $y_{i,k}\sim N(\mu_k,1)$ for $k=1,2$ and $n_1=n_2 = n$. Across studies, $\mu_k\sim N(0,\sigma^2)$. Consider the two SPFs  $\hat Y_1(\cdot) = \bar y_1$ and $\hat Y_2(\cdot) = \bar y_2$, where $\bar y_k = n_k^{-1}\sum_i y_{i,k}$. Set $W = \Delta_1$, the 1-simplex.
	
	\noindent \underline{\it Oracle weights $w_g$}. The oracle weight for the SPF $\hat Y_1$, $w_{g;1}$, is 
	$$
	w_{g;1} = \left\{\begin{array}{cc}
		\frac{|\bar y_2|}{|\bar y_1| + |\bar y_2|} & \bar y_1\cdot \bar y_2<0.\\
		1 & \bar y_1\cdot \bar y_2\geq 0, |\bar y_1|\leq |\bar y_2|\\
		0 & \bar y_1\cdot \bar y_2\geq 0, |\bar y_1| > |\bar y_2|
		\end{array}\right..
	$$
	Thus, the oracle favors $\hat Y_1$, i.e. $w_{g;1}>w_{g;2}$, whenever $\text{MSE}_0(\hat Y_1) < \text{MSE}_0(\hat Y_2)$.
	
	\noindent \underline{\it DR stacking weights}. The DR stacking weights turn out to be $\drw = (1/2,1/2)$. In other words, DR stacking does not place more weight on the PF with smaller $\text{MSE}_0$. Also, the mean and variance of $\drU(w) - U_g(w)$ are
	\begin{gather*}
	\mathbb E_\mathcal S\left(\drU(w) - U_g(w)\right)
	= \sigma^2+1/n,\\
	\var_\mathcal S\left(\drU(w) - U_g(w)\right) = (\sigma^2 + 1/n)^2\left(1 + 2(w_1^2+w_2^2)\right).
	\end{gather*}
	Thus $\drU(w)$ over-estimates $U_g(w)$. Bias and variance remain positive when $n\to\infty$ and are increasing functions of $\sigma^2$, which quantifies the inter-study heterogeneity.
\end{example}

\subsection{CV stacking}
\label{sec:zero-stacking}
In single-study stacking, CV is implemented by using part of the data for the training of the PFs $\hat{\mathcal Y}$ and the rest for the selection of $w$ (see for example \cite{breiman1996stacked}). We introduce two approaches to generalize CV to the multi-study setting.
{\bf Within-study CV ($\cvws$)}, in Section~\ref{sec:cvws}, partitions each study in folds. For each learner, one fold is set aside for selecting the multi-study stacking weight, while the others are used for training SPFs.
{\bf Cross-set CV ($\cvcs$)}, in Section~\ref{sec:cvcs}, treats studies as folds. At iteration $k = 1,\ldots, K$, we build the library of SPFs using all $\mathcal D_t$ that do not contain samples from study $k$. We then combine this restricted library of SPFs to predict for study $k$. The optimal $w$ maximizes a utility estimate based on the predictions generated across all $K$ iterations.

\subsubsection{Within-study CV}
\label{sec:cvws}

This approach includes $M$ iterations and the four steps below. For simplicity, we assume that $|\tdk|$ is divisible by $M$ for $k=1,\ldots, K$, where $|D|$ is the cardinality of $D$.

\begin{enumerate}
	\item Randomly partition each index set $\tdk$ into $M$ equal-size subsets $\widetilde D_{k,1},\ldots, \allowbreak\widetilde D_{k,M}$.
	\item For every $m=1,\ldots,M$, we train SPFs $\hat Y_{t,m}^\ell$ using data in $\mathcal D_t$ that is not included in the $m$-th fold, that is, $\{(x_{i,k},y_{i,k});(i,k)\in D_{t}\cap (\bigcup_{k,m'\neq m} \widetilde D_{k,m'})\}$, for $\ell = 1,\ldots, L$ and $t = 1,\ldots,T$.
	\item For each sample $(i,k)$, denote by $m(i,k)$ the single index $m$ such that $(i,k)\in \widetilde D_{k,m}$. The estimated utility function for the generalist predictions is
	$$
	\wsU(w) = \sum_{k}\frac{1}{Kn_k}\sum_{i}u\left(\sum_{\ell,t}w_{\ell,t}\hat Y_{t,m(i,k)}^\ell(x_{i,k}), y_{i,k}\right),
	$$
	and the optimal weights solve
	$$
	\wsw = \argmax_{w\in W} \wsU(w).
	$$
	\item The $\cvws$ stacked PF is
	$
	\wsY = \sum_{\ell, t} \wsw_{\ell,t}\hat Y_t^\ell.
	$
\end{enumerate}
We could also repeat steps 1-3 for multiple random partitions of the data collection $\mathcal S$ into $M$ folds. Each repetition results in an estimate $\wsU(w)$. Although each repetition generates a similar but not identical set of SPFs, it may still be useful to consider consensus weights $\wsU$ obtained, for example, by optimizing the average over all the estimated utilities.

Although $\cvws$ seemingly avoids DR through holding out part of $\mathcal S$ for the selection of $w$, this strategy turns out to be insufficient in many cases and leads to a utility estimator $\wsU$ that is almost identical to $\drU$. We illustrate this phenomenon in the setting of Example \ref{ex:two-y} and in a regression setting (see Appendix for derivations). 

\setcounter{example}{0}
\begin{example}[No predictors, $\cvws$]
	When applying $\cvws$ in this example, we have
	$$
	|\wsU(w)-\drU(w)| = o_p(|\drU(w)-\lim_{n\to\infty}\drU(w)|).
	$$
	Specifically, we can show that $|\wsU(w) - \drU(w)| = O_p(1/n)$. In contrast, $\sqrt{n}((\bar y_1, \bar y_2)^\intercal - \mu) \overset{d}{\to}N(0,I_2)$ as $n\to\infty$, where $I_2$ is the identify matrix, and therefore $\drU(w) - \lim\limits_{n\to\infty}\drU(w) = O_p(1/\sqrt{n})$. This result suggests with moderate sample size $n$, the difference between the two utility estimators is of smaller magnitude than the variability of the DR stacking estimator itself, and the two estimators can be regarded as identical in practice.
\end{example}

\begin{example}[Regression, $\cvws$ vs. DR]
	\label{ex:K-regression}
	Consider a scenario with $K$ studies, and
	\begin{equation}
		\begin{gathered}
			y_{i,k} = \beta_k^\intercal x_{i,k} + \epsilon_{i,k},~~\beta_k \sim N(\beta_0, \sigma_\beta^2 I_p),~~\epsilon_{i,k}\sim N(0,\sigma^2),
		\end{gathered}
		\label{eq:diff-sim}
	\end{equation}
	where $\beta_k\in\mathbb R^p$ are study-specific regression coefficients and $\epsilon_{i,k}$ is a noise term. We also assume that the $p-$dimensional vector $x_{i,k}$ follows a $N(0,I_p)$ distribution. We prove that again, $|\wsU(w) - \drU(w)| = o_p(|\drU(w) - \lim\limits_{n\to\infty}\drU(w)|)$ as $n\to\infty$ in Proposition \ref{prop:cv-eq}.
\end{example}

\begin{proposition}
	\label{prop:cv-eq}
	Assume $\mathcal D$ is partitioned by study and $n_k=n$ for $k\in\{1,2,\ldots,\allowbreak K\}$. Fix $\beta_1,\ldots,\beta_K$ in model (\ref{eq:diff-sim}). Let $L=1$ and SPFs $\hat Y^\ell_k$ be OLS regression functions. For any $w\in W$, where $W$ is a bounded set in $\mathbb R^{K}$, the following results hold:
	\begin{gather*}
		\sup_{w\in W}\left|\drU(w) - \wsU(w)\right| = O_p(1/n),\\
		\sqrt{n}\left(\left|\drU(w) - \lim\limits_{n\to\infty}\drU(w)\right|\right) \overset{d}{\to} Z.
	\end{gather*}
	Here $Z$ a non-degenerate normally distributed random variable.
\end{proposition}

In Figure \ref{fig:sec3-main}(a), we plot $\left|\drU(w) - \wsU(w)\right|$ and $\left|\drU(w) - \lim\limits_{n\to\infty} \drU(w)\right|$ at $w = \bone_K/K$ as a function of $n$. We find that $\log\left|\drU(w) - \wsU(w)\right|$ can be approximated by $\left(-\log n + c_0\right)$ while $\log \left|\drU(w) - \lim\limits_{n\to\infty} \drU(w)\right|$ can be approximated by $\left(-0.5\log n + c_1\right)$, where $c_0$ and $c_1$ are constants. The results are concordant with the rates of convergence in Proposition \ref{prop:cv-eq}.

\subsubsection{Cross-set CV}
\label{sec:cvcs}
In this section, we introduce $\cvcs$ by focusing on the special case of leave-one-study-out $\cvcs$ and when $\mathcal D$ is partitioned by study. The estimator of $U_g(w)$ is constructed as follows.
\begin{enumerate}
	\item Generate the library of SPFs $\hat{\mathcal Y}$ using every set in $\mathcal D$ (i.e. every study in this case). This library remains identical across the $K$ iterations.
	\item At iterations $k = 1,\ldots, K$, evaluate the utility of $w$ using $\tdk$ as validation set, excluding SPFs trained on $\tdk$:
	\begin{equation}
		\csU_k(w) = \frac{1}{n_k}\sum_{i} u\left(\sum_{\ell,k'}\mathbb I(k'\neq k) w_{\ell,k'} \hat Y_{k'}^\ell(x_{i,k}), y_{i,k}\right).
		\label{eq:loo-ind}
	\end{equation}
	\item Obtain the utility estimator $\csU (w)$ by combining all $\csU_k$ across the $K$ iterations:
	\begin{equation}
		\csU (w)  = \frac{1}{K}\sum_{k} \csU_k\left(\frac{w}{1-1/K}\right).
		\label{eq:cvcs-U-study}
	\end{equation}
	
	\item  Select $w$ as $\csw = \argmax_{w\in W} \csU (w)$. The $\cvcs$ stacked PF is
	$
	\csY = \sum_{\ell, k} \csw_{\ell, k}\hat Y_{k}^\ell.
	$
\end{enumerate}

\noindent \textbf{Interpretation of $\cvcs$.} Expression (\ref{eq:cvcs-U-study}) and  $\csY$ are simple to interpret when each study is regarded as a ``fold''. Our goal is to obtain an accurate estimate of the generalist utility for stacking weights. Similar to single-study CV, $\cvcs$ will sequentially hold out each study (fold) and use it to test the stacked PF's generalist prediction performance. When study $k$ is held out, SPFs $\hat Y_k^\ell$, $\ell = 1,\ldots,L$ from study $k$ are removed from the stacking stage to avoid data reuse. The stacked PF using this reduced list of SPFs can approximate $\hat Y_w$ by rescaling $w$. If we assume that the total contribution of all SPFs of a study $k$ to the final prediction is roughly identical across $k = 1,\ldots,K$, which is not too unreasonable given that $P_1,P_2,\ldots$ are exchangeable, then $\hat Y_w \approx \sum_{k'\neq k,\ell} w_{\ell, k'}/(1-1/K)\hat Y_{k'}^\ell$. The estimated generalist utility for the original library of SPFs with study $k$ as the hold-out is thus $\csU_k(w/(1-1/K))$. The overall estimated generalist utility, after iterating over all studies, is the average of $\csU_k(w/(1-1/K))$, as defined in (\ref{eq:cvcs-U-study}). See the following paragraph for a more concrete discussion.\\


\noindent \textbf{Rationale of $\cvcs$.} We motivate $\cvcs$ as an approximated unbiased estimator of $U_g$. Define $w_{-k}$ to be equal to $w$ except for components $w_{\ell,k}$, $\ell = 1,\ldots,L$, which are set to zero.
With exchangeable $P_k$ distributions, $\csU_k(w)$ is an unbiased estimator of $U_g(w_{-k})$, that is:
$$
\mathbb E\left[\csU_k(w)\mid \mathcal S_{-k}\right] = U_g(w_{-k}),
$$
where $\mathcal S_{-k} = \{\mathcal S_{k'};k'\neq k\}$.
It follows that 
$$
\sum_{k} \mathbb E\left[ \csU_k(w)\mid \mathcal S_{-k} \right]/K = \sum_{k=1}^K  U_g(w_{-k})/K.
$$
Consider the Taylor expansion of $\sum_k U_g(w_{-k})/K$ around $\bone_K/K$:
$$
\sum_k U_g(w_{-k})/K \approx U_g(\bone_K/K) + \frac{\partial U_g}{\partial w^\intercal}(\bone_K/K) \left(\sum_k w_{-k}/K - \bone_K/K\right).
$$
By construction, $\sum_k w_{-k}/K = (1-K^{-1})w$.
It follows that
\begin{align*}
\sum_k \mathbb E\left[\csU_k(w)\mid \mathcal S_{-k}\right]/K &\approx U_g(\bone_K/K) + \frac{\partial U_g}{\partial w^\intercal}(\bone_K/K)\left( (1-K^{-1})w-\bone_K/K \right) \\
&\approx U_g((1-K^{-1})w).
\end{align*}
The last approximation holds because the first approximation is the Taylor expansion of $U_g((1-K^{-1})w)$ at $w = \bone_K/K$ up to the first order. Note that $\csU(w) = \sum_k\csU_k\left(w/(1-K^{-1})\right)/K$, therefore
$$
\mathbb E(\csU(w)|\mathcal S) = K^{-1}\sum_k \mathbb E\left[\csU_k\left(\frac{w}{1-K^{-1}}\right)\bigg |\mathcal S_{-k}\right]\approx U_g(w),
$$
for $w$ close to $\bone_K/K$.\\

We next illustrate that the finite-sample behavior of $\cvcs$ stacking is distinct from that of DR stacking and $\cvws$ in Examples \ref{ex:two-y} and \ref{ex:K-regression}. We also introduce Example~\ref{ex:three-study} to illustrate the important differences between $\cvcs$ and $\cvws$ in generalist predictions.

\setcounter{example}{0}
\begin{example}[No predictors, $\cvcs$]
	We consider $\csU(w)$ for generalist predictions. See Appendix for detailed derivations of results below. In this example $\mathcal D$ is partitioned by study, and
	$$
	\csU(w) = -\left(2w_1^2\bar y_1^2 + 2w_2^2\bar y_2^2 - 2\bar y_1\bar y_2 + \frac{\overline{y_1^2}+\overline{y_2^2}}{2}\right),
	$$
	It follows that 
	\begin{gather*}
	\mathbb E\left(\csU(w) - U_g(w)\right) = -(\sigma^2+1/n)(w_1^2+w_2^2),\\
	\var\left(\csU(w) - U_g(w)\right) = 2(\sigma^2+1/n)^2\left(w_1^2 + w_2^2\right)^2.
	\end{gather*}
	Therefore, $\csU(w)$ underestimates $U_g(w)$. Also, for every $w\in\Delta_1$, $\csU$ has a smaller bias and variance compared to $\drU$:
	\begin{gather*}
	|\mathbb E\left(\csU(w) - U_g(w)\right)| < |\mathbb E\left(\drU(w) - U_g(w)\right)|,\\ \var\left(\csU(w)-U_g(w)\right) < \var\left(\drU(w)-U_g(w)\right).
	\end{gather*}
	
	The maximizer of $\csU(w)$ is
	$
	\csw = \left(\bar y_2^2/(\bar y_1^2+\bar y_2^2), \bar y_1^2/(\bar y_1^2 + \bar y_2^2)\right)$. Like the oracle weights $w_g$, $\csw$ favors the SPF with lower $\text{MSE}_0$ (see equation \refeq{eq:cMSE-def}). More importantly, $\cvcs$ has smaller generalist prediction error $\text{MSE}_0$ than DR stacking:
	$$
	\text{MSE}_0\left(\csY\right) - \text{MSE}_0\left(\drY\right) = \frac{-(\bar y_1^2 - \bar y_2^2)^2(\bar y_1 + \bar y_2)^2}{4(\bar y_1^2 + \bar y_2^2)^2}\leq 0.
	$$
	The equality holds if and only if $\bar y_1 = \bar y_2$.
\end{example}

\begin{example}[Regression, $\cvcs$]
	We revisit the regression setting to compare $\cvcs$, DR stacking and $\cvws$. As $n\to\infty$, $\cvcs$ behaves differently compared to $\cvws$. Under mild assumptions (cf.\ Prop.~\ref{prop:cv-eq}), $\wsU(w) - \drU(w)\overset{p}{\to}0$ as $n\to\infty$ when $K$ is fixed. This is not the case for $\cvcs$. As $n\to\infty$,we have
	$$
	\csU(w)\overset{p}{\to} \sum\limits_{k,k'} \frac{K(K-2+\delta_{k,k'})}{(K-1)^2}w_k\beta_k^\intercal \beta_{k'}w_{k'}-2\sum_k w_k\beta_k^\intercal\bar\beta_{-k} + \frac{\sum_k \beta_k^\intercal\beta_k}{K} + \sigma^2,
	$$
	where $\delta_{k,k'} = \mathbb I(k = k')$ and $\bar\beta_{-k} = \sum_{k'\neq k} \beta_{k'}/(K-1)$. This limit is
	different from the limit of $\drU$,
	$$
	\drU(w)\overset{p}{\to} \sum\limits_{k,k'}w_k\beta_k^\intercal\beta_{k'}w_{k'}-\frac{2\sum_kw_k\beta_k^\intercal(\sum_{k'}\beta_{k'})}{K} + \frac{\sum_k \beta_k^\intercal\beta_k}{K} + \sigma^2.
	$$
	
	Both DR stacking and $\cvws$ overestimate $U_g$. We compute by Monte Carlo simulations the bias and variance for DR stacking, $\cvws$ and $\cvcs$ (Fig. \ref{fig:sec3-main}(b-c)). We assume that $\sigma = \sigma_{\beta} = 1$, $K = 20$, $p=10$ and $n_k=100$. The only learner ($L = 1$) is OLS. The means and variances are derived based on 50 replicates. The data generating model is defined by expression (\ref{eq:diff-sim}), and we generate a new set of $\beta_1,\ldots,\beta_K$ in each simulation replicate. The bias and variance for DR and $\cvws$ are nearly identical (cf.\ Prop.\ \ref{prop:cv-eq}). In \ref{fig:sec3-main}(c), the curves for DR and $\cvws$ even overlap with each other. On the other hand, $\cvcs$ has smaller bias but larger variance compared to DR stacking and $\cvws$, indicating a bias-variance trade-off of the $\cvcs$ strategy.
\end{example}

\begin{example}
	\label{ex:three-study}
	We consider the data generating model in (\ref{eq:diff-sim}), with a different distribution of $\beta_k$'s such that
	$
	\beta_k \overset{iid}{\sim} 0.5\delta_{\beta_0} + 0.5 N(\beta_0, \sigma^2_\beta I_p),
	$
	where $\delta_{\beta_0}$ is the point-mass distribution at $\beta_0$. The oracle PF is $Y_g(x) = \beta_0^\intercal x$. Assume $K = 3$, $n_k = 10,000$, $k=1,2,3$, and $\sigma^2 = 10$. In a single simulation of the multi-study collection $\mathcal S$, we set $\beta_1=\beta_2=\beta_0$ and $\beta_3 \neq \beta_0$. We train three SPFs: $\hat Y_k(x) = \hat\beta_k^\intercal x, k = 1,2,3$, where $\hat \beta_k$ is the OLS estimate of $\beta_k$ using $\tdk$. We consider $\cvcs$ and $\cvws$ for generalist predictions. 
	
	In Figure \ref{fig:sec3-main}(d-e), we plot $\wsU(w)$ and $\csU(w)$ as a function of $(w_1,w_2)$ when $w\in \Delta_2$. The contour lines are straight for $\wsU$ and curved for $\csU$. This illustrates that the geometries of the estimated utilities from $\cvws$ and $\cvcs$ are generally different. We compare $\cvws$ and $\cvcs$ in generalist predictions. Specifically, we simulate a dataset with size $n=10,000$, $\{(x_i, y_i);i=1,\ldots,n\}$ from $P_0$. We calculate the predicted values for all samples using different PFs. Denote the vectors of predicted values using the three SPFs by $\hat Y_k(X) = (\hat Y_k(x_i),i=1,\ldots,n)$, $k = 1,2,3$, predicted values of $\cvws$ by $\wsY(X) = (\wsY( x_i),i=1,\ldots,n)$, $\cvcs$ by $\csY(X) = (\csY(x_i),i=1,\ldots,n)$ and oracle by $Y_g(X) = (Y_g(x_i), i = 1,\ldots,n)$. We then project vectors $\hat Y_k(X)$, $k=1,2,3$, $\wsY(X)$, $\csY(X)$ and $Y_g(X)$ into $\mathbb R^2$ using principal component analysis (PCA), see Figure \ref{fig:sec3-main}(f). We note that $\wsY$ in this example deviates substantially from the oracle $Y_g$. On the other hand, $\csY$ is close to $Y_g$.
\end{example}

\begin{figure}[htbp]
	\centering
	\includegraphics[width=0.95\linewidth]{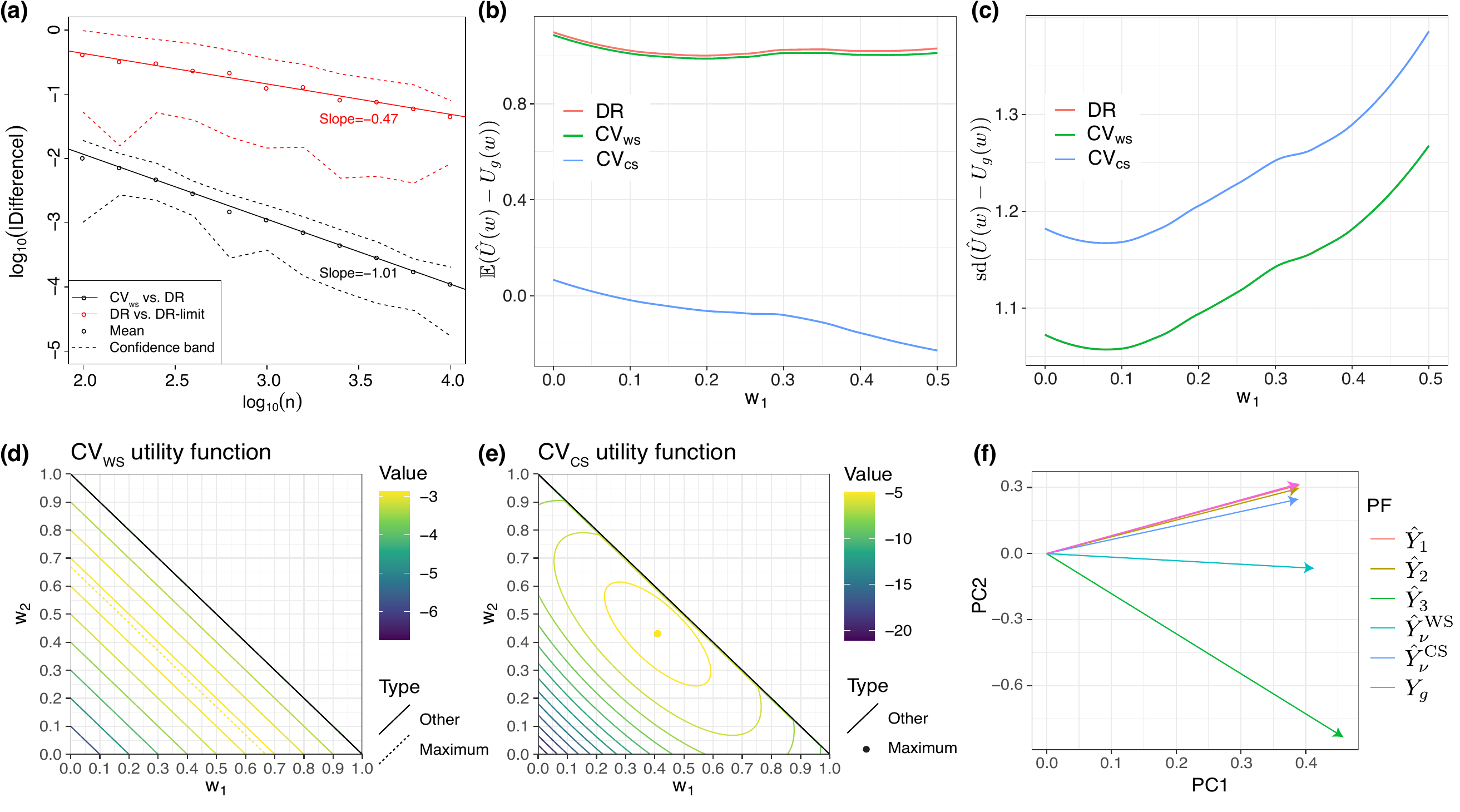}
	\caption{\small \it (\textbf{a}) Comparisons of DR and $\cvws$ in Example \ref{ex:two-y}. We illustrate $|\drU(w)-\wsU(w)|$ (black) and $|\drU(w) - \lim_{n\to\infty}\drU(w)|$ (red) at $w = \bone_K/K$ as a function of $n$. The dashed lines indicate the upper and lower fifth percentile of the differences simulation replicates. The solid lines illustrate the linear approximation of log-transformed average difference (black and red dots). The slopes approximate the rates of convergence of the differences when $n\to\infty$. (\textbf{b-c}) Comparisons of  bias and standard deviation of the utility estimates from DR, $\cvws$, $\cvcs$ stacking in Example \ref{ex:K-regression}. Note that in (\textbf{c}) the curves for DR and $\cvws$ stacking overlap with each other. (\textbf{d-f}) Comparisons of $\cvws$ and $\cvcs$ in generalist predictions (Example \ref{ex:three-study}). We visualize the contour plots of $\wsU(w)$ (\textbf{d}) and $\csU(w)$ (\textbf{e}). We use a dashed line and a dot to illustrate the maximizers of $\wsU(w)$ and $\csU(w)$ respectively. (\textbf{f}) PCA of different PFs. We include three SPFs $\hat Y_k$, two stacked PFs $\wsY$ and $\csY$, and the oracle PF $Y_g$.}
	\label{fig:sec3-main}
\end{figure}

\noindent \textbf{$\cvcs$ with arbitrary $D_t$'s.} $\cvcs$ can also be applied when $\mathcal D$ is not partitioned by study. In this case, we extend the definition of $\csU_k(w)$ to
$$
\csU_k(w) = \frac{1}{n_k} \sum_{i} u\left( \sum_{\ell, t}\mathbb I(k\notin s_t)w_{\ell,t}\hat Y_{t}^\ell(x_{i,k}), y_{i,k}\right),
$$
where $s_t = \{k:(i,k)\in D_t\text{ for some }i=1,\ldots,n_k\}$ is the set of study indices that are present in $D_t$. Similar to the partitioned by study setting, with exchangeable distributions $(P_1,P_2,\ldots)$, we have
$
\mathbb E\left[\csU_k(w)\mid \mathcal S_{-k}\right] = U_g(w_{-k}),
$
where $w_{-k}$ is now equal to $w$ except for all elements $w_{\ell,t}$ with $s_t$ containing $k$, which are set to zero. $\csU_k(w)$ are combined into $\csU(w)$. Using the same Taylor expansion argument as in the previous derivations,
$$
\sum_k \mathbb E\left[\csU_k(w)\bigg|\mathcal S_{-k}\right]/K \approx U_g(\Gamma w),
$$
where $\Gamma$ is a $KT\times KT$ diagonal matrix with the term corresponding to $w_{\ell,t}$ equal to $1-\#\{k\in s_t\}/K$. 

Next we assume that $1-\#\{k\in s_t\}/K>0$, that is, $D_t$ does not contain samples from all studies $k=1,\ldots,K$. Then the estimator of utility is
$$
\csU(w) = \sum_k\frac{1}{Kn_k}\sum_{i=1}^{n_k}u\left(\sum_{\ell,t}\frac{\mathbb I(k\notin s_t)}{1-\#\{k\in s_t\}/K}w_{\ell,t}\hat Y_t^\ell(x_{i,k}), y_{i,k}\right),
$$
which again has the approximate unbiasedness property as in the case where $\mathcal D$ is partitioned by study.

\section{Properties of stacking PFs for generalist predictions}
In consideration of the similarity of  $\drY_{w}$ and $\wsY_{w}$, we focus on the comparison between $\cvcs$ and DR stacking. We will assume that the data is generated from a hierarchical model, and that $\mathcal D$ is partitioned by study. We will only consider the scenario where studies differ in the conditional distribution $P(Y|X)$ while $P(X)$ remains the same across all studies.

We show that, with $\drw$ and $\csw$, the expected prediction error of the generalist PFs in future study $k>K$ is close
to the prediction error of an asymptotic oracle PF when $n_k$ and $K$ are large. The discrepancy between the prediction errors can be  bounded by a monotone function of $K$ and $\min_k n_k$.

We consider arbitrary regression functions $\mathbb E(y_{i,k}|x_{i,k})$ with additive error: 
\begin{equation}
	\begin{aligned}
		y_{i,k} = f_k(x_{i,k})  + \epsilon_{i,k},~f_k \overset{iid}{\sim} F_f,~x_{i,k}\buildrel iid \over\sim F_X,
	\end{aligned}
	\label{eq:hier-mod}
\end{equation}
for $i=1,\ldots,n_k$ and $k=1,2,\ldots$. Here $f_k: \mathbb R^p\to\mathbb R$, $k\ge 1$, are random functions.
The mean of $F_f$ is indicated as $f_0$. The noise terms $\epsilon_{i,k}$ are independent with mean zero and variance $\sigma^2$.

Our bound is stated in Proposition~\ref{prop:bound} below, which requires the following assumptions:
\begin{enumerate}
	\item[A1.] There exists an $M_1<\infty$ such that a.e. (i.e., with probability 1) for any $k>0$ and $\ell\leq L$,
	$$
	\sup_{x\in\mathcal X} |f_k(x)|\leq M_1,~\sup_{x\in\mathcal X}|\hat Y_k^\ell(x)|\leq M_1,\text{ and,}~|y_{i,k}|\leq M_1.
	$$
	We require the first inequality to have probability 1 with respect to $F_f$. We require the second and third inequalities to have probability 1 with respect to the distribution of $\mathcal S$. These assumptions are not restrictive. For example, if $\mathcal X$ is a compact set and the outcomes $Y$ are bounded, then the SPFs trained with a penalized linear regression model, e.g. a LASSO regression, or with tree-based regression models satisfy the assumption.
	
	\item[A2.] There exist constants
	$M_2<\infty$, $p_{\ell}>0$ and random functions $Y_{k}^{\ell}$ for $k=1,\ldots, K$, $\ell = 1,\ldots,L$, such that $\sup_{x\in\mathcal X}|Y_{k}^\ell(x)|\leq M_1 ~a.e.,$ and for sufficiently large $n_k$, $k=1,\ldots,K$,
	$$
	\mathbb E\left[\int_{x} n_k^{2p_{\ell}}\left(\hat Y_{k}^\ell(x) - Y_{k}^\ell(x)\right)^2dF_X(x)\right] \leq M_2.
	$$
	Here the expectation is with respect to the distribution of $\mathcal S$ and $Y_k^\ell = \lim\limits_{n_k\to\infty}\hat Y_k^\ell$. For example, if $\|\mathbb E((X_k^{\intercal} X_k)^{-1})\|_F = O(1/n_k)$, where $\|\cdot\|_F$ is the Frobenius norm, and $\hat Y_k^\ell$ is an OLS regression function, then $Y_k^\ell(x) = (\lim \hat\beta_k)^\intercal x$ and $p_\ell < 1/2$ satisfy the above inequality.
\end{enumerate}

The asymptotic oracle PF is based on the limiting oracle generalist stacking weights $w_g^0$, which combine the limiting SPFs $Y_k^\ell$, $k = 1,\ldots,K, \ell = 1,\ldots, L$:
$$
w_g^0 = \argmax_{w\in W} \int_{x,y} u(Y_w(x),y)dP_0(x,y),
$$
where $Y_w = \sum_{\ell,t} w_{\ell,t} Y_t^\ell$ and $P_0$ (defined in Section 2.2) is the average joint distribution of $(X,Y)$ across studies $k\ge1$. The generalist prediction error of $Y_w$ is
$$
\psi(w) = \int_{x,y} (y - Y_w(x))^2dP_0(x,y) = w^\intercal \Sigma w - 2 b^\intercal w + \int_{y}y^2 dP_0(y).
$$
In Proposition \ref{prop:bound} we  compare $\drY_w$ and $\csY_w$ to the asymptotic oracle PF,  using  the  metrics $\mathbb E[\psi(\drw) - \psi(w_g^0)]$ and $\mathbb E[\psi(\csw) - \psi(w_g^0)]$. We show that both $\cvws$ and $\cvcs$ have the \textit{oracle property} in the sense that their generalist prediction error converge to that of the asymptotic oracle PF as the size of the studies $\min_k n_k$ and the number of studies $K$ both go to infinity.
\begin{proposition}
	\label{prop:bound}
	Let $K\geq 2$ and $w\in\Delta_{KL-1}$.  Consider $K$ training datasets and   all future studies $k=K+1,\ldots$ from model (\ref{eq:hier-mod}). If (A1) and (A2) hold, then
	\begin{gather*}
	\mathbb E\left(\psi(\drw) - \psi(w_g^0)\right) \leq C_0\sqrt{\log(KL)}K^{-1/2} + C_1(\min_kn_k)^{-\min_{\ell}p_{\ell}},\\
	\mathbb E\left(\psi(\hat w^{\text{CS}}) - \psi(w_g^0)\right) \leq C_0'\sqrt{\log(KL)}K^{-1/2} + C_1'(\min_kn_k)^{-\min_{\ell}p_{\ell}},
	\end{gather*}
	where the expectations are taken over the joint distribution of the data $\mathcal S$. $C_0$, $C_0'$, $C_1$ and $C_1'$ are constants, independent of $K$ and $n_k$.
\end{proposition}

We then compare the performance of DR and $\cvcs$ stacking. Consider first the setting in Example \ref{ex:K-regression} (see Appendix for derivations).
\setcounter{example}{1}
\begin{example}[DR and $\cvcs$]
	We set $n_k = n$ for $k = 1,\ldots, K$. Consider first the bias (see Appendix for details):
	\begin{align*}
		&\mathbb E\left(\drU(w)-U_g(w)\right) =
		\frac{p(p+1)\sigma^2}{Kn(n-p-1)}\|w\|_2^2 + \frac{2(p\sigma_\beta^2 + p\sigma^2/n)}{K}w^\intercal \bone_K\\
		&\mathbb E\left(\csU(w),U_g(w)\right) = \frac{\|\beta_0\|_2^2\sum_{k\neq k'}w_kw_{k'}}{(K-1)^2}  - \frac{\|\beta_0\|_2^2 + p(\sigma_\beta^2+\frac{\sigma^2}{n-p-1})}{K-1}\|w\|_2^2.
	\end{align*}
	$\mathbb E(\drU(w)-U_g(w))\geq 0$ and $\mathbb E(\csU(w)-U_g(w))\leq 0$ for any $w\in\Delta_{K-1}$. As $\|\beta_0\|_2/\sigma_\beta \to 0$, the absolute bias of DR stacking becomes uniformly larger than $\cvcs$. On contrary, as $\|\beta_0\|_2/\sigma_\beta\to\infty$, the absolute bias of DR stacking becomes uniformly smaller than $\cvcs$.
	
	Although smaller absolute bias does not necessarily imply better generalist prediction accuracy, the relative prediction error of DR stacking to $\cvcs$, measured by $\mathbb E(\psi(\drw) - \psi(\csw)$, follows a similar trend as $\sigma_\beta$ changes. In Figure \ref{fig:zero-out-comp}, we use simulated data to illustrate $\mathbb E[\psi(\drw) - \psi(\csw)]$ as a function of $\sigma_\beta$. We generate data from the model in (\ref{eq:diff-sim}) and set $p = 10$, $\beta_0 = \bone_p$ and $n=200$. We examine two scenarios $K=2$ and $K=9$. $\mathbb E(\psi(w))$ is calculated with 1,000 replicates. As shown in the figures, when $\sigma_\beta$ is small, $\cvcs$ has lower generalist prediction accuracy than DR stacking ($\mathbb E(\psi(\drw) - \psi(\csw))<0$). When $\sigma_\beta$ is large, $\cvcs$ outperforms DR stacking.
	
	In Proposition \ref{prop:comp}, we derive the relationship between $\sigma_\beta$ and $\mathbb E(\psi(\drw) - \psi(\csw))$ when $n_k = n\to\infty$, $k=1,\ldots,K$, and $K\leq p/2$ is fixed.
\end{example}

\begin{proposition}
	\label{prop:comp}
	Consider the model in Example~\ref{ex:K-regression} with an OLS learner ($L=1$). Assume $p>4$, $K\leq p/2$ and $W = \mathbb R^K$. If $\beta_0 = \bzero$, then $\lim\limits_{n\to\infty}\mathbb E(\psi(\drw) - \psi(\csw))>0$ for every $\sigma_\beta>0$. If $\beta_0 \neq \bzero$, $$\lim\limits_{\sigma_\beta\to\infty}[\lim\limits_{n\to\infty}\mathbb E(\psi(\drw) - \psi(\csw))]\to\infty,$$ and $\lim\limits_{n\to\infty}\mathbb E(\psi(\drw) - \psi(\csw))\uparrow 0$ as $\sigma_\beta\downarrow 0$.
\end{proposition}

In Example \ref{ex:K-regression} with large sample sizes and relatively small number of studies, if the level of heterogeneity across studies $\sigma_\beta$ is sufficiently large, $\cvcs$ outperforms DR stacking. On the other hand, if $\sigma_\beta$ is close to zero, DR stacking tends to work better. Of note, if $\sigma_\beta$ is exceedingly high, neither approaches could work as cross-study predictions of a SPF would be problematic.

\begin{figure}[htbp]
	\centering
	\includegraphics[scale=0.32]{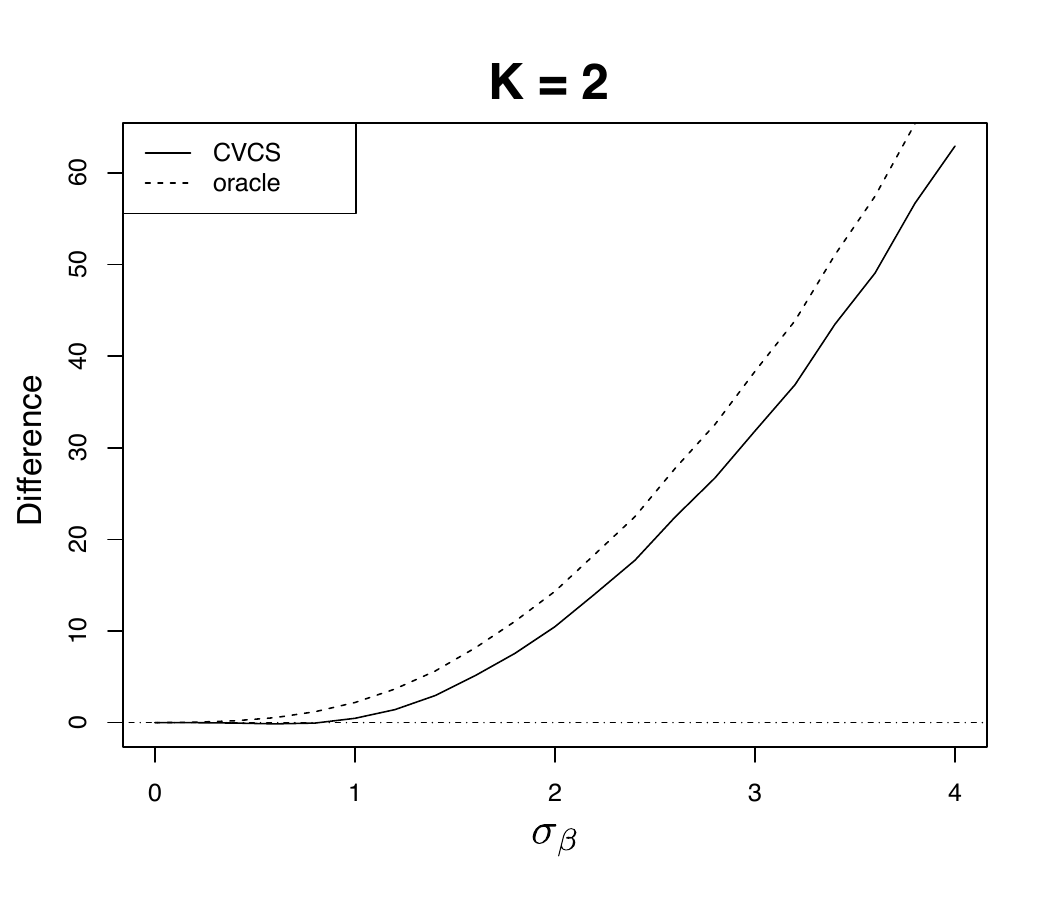}
	\includegraphics[scale=0.32]{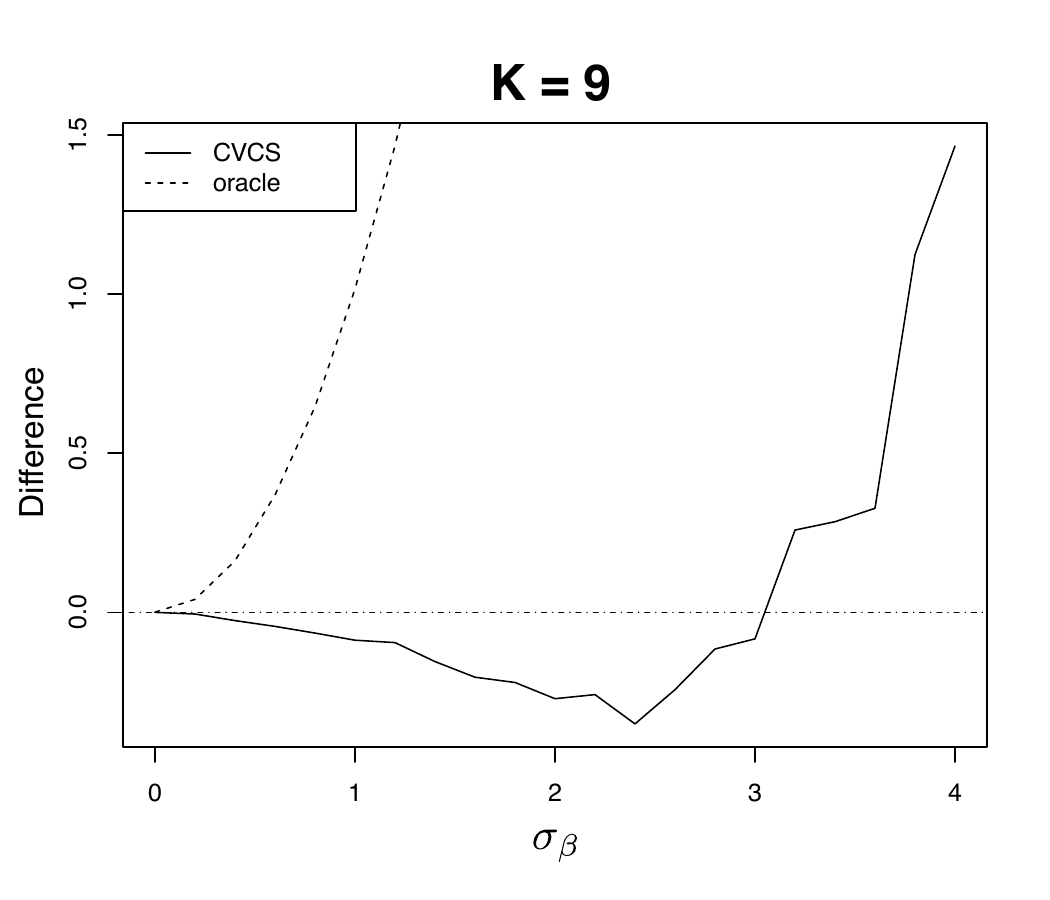}
	\caption{\small \it Comparison of DR stacking and $\cvcs$ when $K=2$ (left) and $K=9$ (right). The plots illustrate the differences $\mathbb E(\psi(\drw) - \psi(\csw))$ (CVCS) and $\mathbb E(\psi(\drw) - \psi(w_g^0))$ (oracle). We set $p=10$, $\beta_0 = \bone_K$, $n = 200$, and vary $\sigma_\beta$. The expected values are calculated with 1,000 replicates.}
	\label{fig:zero-out-comp}
\end{figure}

\section{Simulation studies}
\label{sec:simulation-studies}
We investigate empirically whether the error bounds in Proposition \ref{prop:bound} are tight, and how the difference $\mathbb E(\psi(\drw)) - \mathbb E(\psi(\csw))$ changes as $K$ and the inter-study heterogeneity $\sigma_\beta$ vary.

We illustrate the behavior of $\mathbb E\left(\psi(\drw)-\psi(w_g^0)\right)$ from Proposition \ref{prop:bound} with a numeric example and compare the actual difference to the analytic upper bound as $n_k$ and $K$ change. We use the simulation setting in Example \ref{ex:K-regression} with each component of $\beta_k$ uniformly distributed in $[0,1]$. Each component of $x_{i,k}$ is uniform in $[-1,1]$ and $\epsilon_{i,k}$ is uniform in $[-1,1]$. We set $n_k = n$ for all $k$ and approximate $\mathbb E\left(\psi(\drw) - \psi(w_g^0)\right)$ with Monte Carlo simulations, for $n = 100, 200, 400$ as $K$ increases from $20$ to $50$ (Fig. \ref{fig:sec5-res}(a)) and for $K=5, 15, 20$ as $n$ increases from $20$ to $100$ (Fig. \ref{fig:sec5-res}(b)). We use the constraint $w\in \Delta_{K-1}$. We simulate 1000 replicates. The difference $\mathbb E(\psi(\drw)-\psi(w_g^0))$ is approximately a linear function of $\sqrt{\log{K}/K}$ (Fig. \ref{fig:sec5-res}(a)) and $n^{-1/2}$ (Fig. \ref{fig:sec5-res}(b)). The results for $\mathbb E(\psi(\csw) - \psi(w_g^0))$ are similar (see Figure \ref{fig:cs-bound}).

We then perform a simulation analysis focused on the main results in Proposition \ref{prop:comp}. This analysis identifies regions where $\mathbb E(\psi(\drw) - \psi(\csw))>0$. We use the data generating model in Example \ref{ex:K-regression}. We fix $\beta_0 = \bone_{10}$, vary $\sigma_\beta$ between 0 and 4, and the number of studies $K$ between 3 and 50. We then approximate $\mathbb E(\psi(\drw)) - \mathbb E(\psi(\csw))$ with Monte Carlo simulations (Fig.~\ref{fig:sec5-res}(c)). In this figure the regions of $K$ and $\sigma_\beta$ in which $\mathbb E(\psi(\drw)) - \mathbb E(\psi(\csw))<0$ are in blue. We can see that when $\sigma_\beta$ is large, it is preferable to use $\cvcs$ for generalist predictions, while at smaller values of $\sigma_\beta$ the choice depends on the number of studies, and there remains a region where data reuse is preferable.

\begin{figure}[t]
	\centering
	\includegraphics[width=0.95\linewidth]{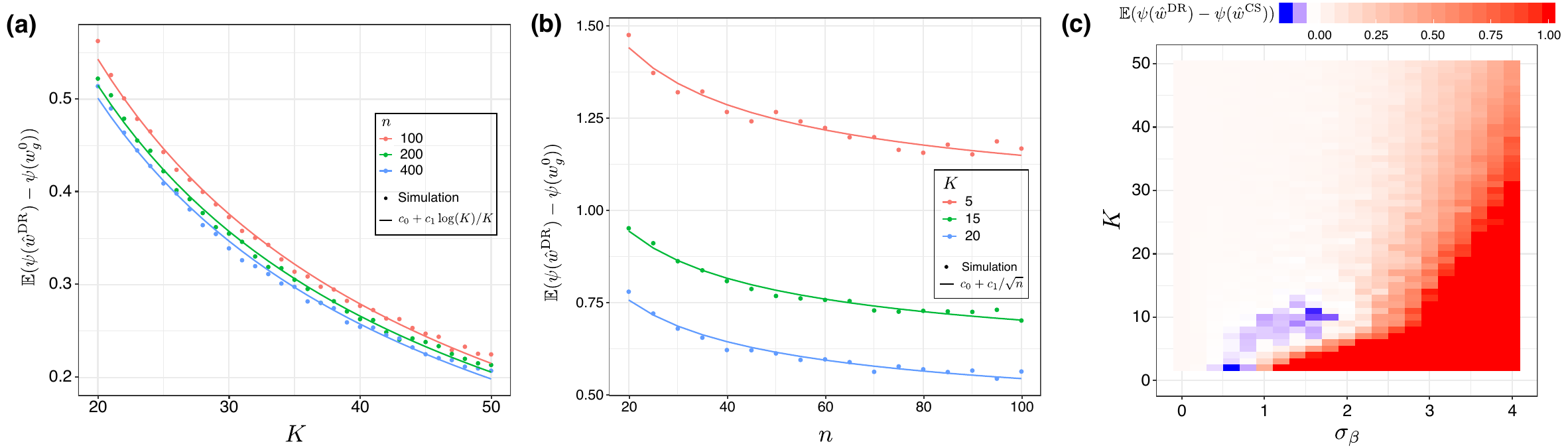}
	\caption{\small \it(\textbf{a-b}) $\mathbb E(\psi(\drw)-\psi(w_g^0))$ as a function of $K$ and $n$. Dots represent results from the Monte Carlo simulation. Lines illustrate the fitted functions $c_0 + c_1\log(K)/K$ (\textbf{a}) and $c_0 + c_1/\sqrt{n}$ (\textbf{b}) to the Monte Carlo results. (\textbf{c}) Comparison of generalist prediction accuracy of DR stacking and $\cvcs$ as measured by $\mathbb E(\psi(\drw) - \psi(\csw))$.}
	\label{fig:sec5-res}
\end{figure}

\section{Application}
As an illustration of the methods proposed in this article, we develop generalist PFs in an environmental health application. Our data consists of annual mortality rates for years~2000 to~2016 in 41,337 unique zip codes across the U.S. The annual average exposure to PM$_{2.5}$ is available for each zip code \citep{di2019ensemble}. In addition to the measurement of PM$_{2.5}$ and the annual mortality rate, we consider several zip code-level demographic variables, which include the percentage of ever smokers, percentage of the population who are black, median household income, median value of housing, percentage of the population below the poverty level, percentage of the population with education below high school, percentage of owner-occupied housing units, and population density.

Our goal is the generalist prediction of zip code-level annual mortality rates from 2000 to 2016 based on yearly exposure to PM$_{2.5}$ and zip-code-level demographic predictors in unseen studies. We also include year as a predictor to capture temporal trends of mortality rate. The zip codes are the sampling units, while studies are either States in one analysis or Counties in another. For generalist predictions at the state level, we randomly select 10 states to train an ensemble of state-specific PFs with random forest, and use DR stacking and $\cvcs$ to combine them and predict mortality in the remaining unseen 40 states. Here $k = 1, \ldots, 10$.  Each state contains on average 980 unique zip codes and 13,936 zip-code-by-year records. The metric we use to evaluate the accuracy of the stacked PF is the average RMSE across all years in the set-aside testing states or counties. We repeat this procedure 20 times. The county-level analysis is performed by restricting the analysis to the state of California and considering counties within it to be studied. For this county-level dataset, we combine 10 county-specific PFs and validate the stacked PFs using the remaining 47 counties. Each county contains on average 23 unique zip codes and 657 zip code by-year records. The results are shown in Figure~\ref{fig:sec6-res}.

\begin{figure}
	\centering
	\includegraphics[scale=0.7]{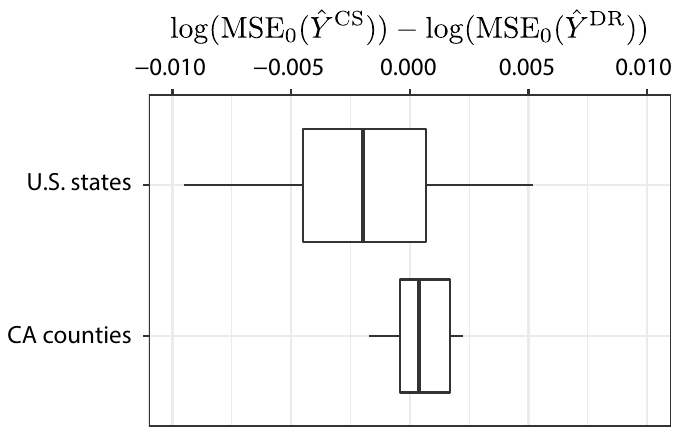}
	\caption{\small \it Comparison of DR stacking and $\cvcs$ in generalist predictions of mortality. Boxplots show the distributions of the differences in accuracy of the two stacking methods across 20 replicates, evaluated as average RMSE across all validation regions and years. In each replicate, we randomly select 10 regions to train, and evaluate the stacked PF on the remaining regions (39 test states in the U.S. and 47 test counties in California).}
	\label{fig:sec6-res}
\end{figure}

The State and County level analyses provide a useful contrast, as they reflect different degrees of heterogeneity in the underlying data distributions. For summary quantification, we estimate between-study heterogeneity by the variance of the predicted values from the 10 PFs on a single sample, with all predictors fixed at the national means. Across the 20 replicates, the mean of this heterogeneity metric is 0.017 for the state-specific PFs and 0.0029 for the county-specific PFs. 
In the state-level analysis, we thus combine study-specific PFs that reflect higher heterogeneity, and the performance of $\cvcs$ is slightly but consistently better compared to DR stacking (19 out of 20 replicates). In contrast, when we consider California and combine county-level models, with less heterogeneity, DR stacking has a smaller mean RMSE than $\cvcs$ (18 out of 20 replicates). This result is aligned with Proposition \ref{prop:comp}, which indicates for a fixed $K$, $\cvcs$ can outperform DR when the inter-study heterogeneity is large.

\section{Discussion}

We studied a multi-study ensemble learning framework based on stacking for generalist predictions. We compared CV and DR approaches for the selection of stacking weights of SPFs in the ensemble. We discussed oracle properties of the generalist PFs and provided a condition under which $\cvcs$ can outperform DR stacking. We illustrated these developments with simulations and applied our approach to predict mortality based on multi-state and multi-county datasets.

We illustrated in examples that DR stacking is nearly identical to $\cvws$ for generalist predictions, while $\cvcs$ is substantially different from both. We examined the differences $\mathbb E(\psi(\drw) - \psi(w_g^0))$ and $\mathbb E(\psi(\csw) - \psi(w_g^0))$ with Monte Carlo simulations and observed in simulations that they can be approximated by the bounds in Proposition \ref{prop:bound}. We also confirmed with numerical experiments and analytic results that $\cvcs$ outperforms DR stacking ($\mathbb E(\psi(\drw) - \psi(\csw))>0$) when the heterogeneity of $E(Y|X)$ across studies is high.

There are a couple of topics we would like to pursue further in future research. The procedure that we developed are applicable to any $\mathcal D$, beyond the partitioned-by-study case where $\mathcal D = \{\tdk;k=1,\ldots,K\}$, as indicated by Figure \ref{fig:D-example}. We would like to investigate the performance of our procedure for generic $\mathcal D$. We expect that a scenario where non-study-specific $\mathcal D$ is beneficial is when the multi-study dataset exhibits cluster structure with subsets of studies that share identical conditional distribution $P(Y|X)$. In this case, merging studies from the same cluster to form a training set $D$ increases the prediction accuracy of SPFs and in turn improves the performance of the resulting PFs. The other topic will target at non-exchangeable studies. As we discussed in Section \ref{sec:stacking}, study weights $\nu$ enable us to work with non-exchangeable studies, by specifying a set of $\nu_k$ that reflects the relationship between the individual studies, which might be informed by some external variables, such as time and geographic locations.

%
\begin{appendix}
\section{Proof of Proposition 1}
Let $M$ be such that $n$ is divisible by $M$. To prove $\sqrt{n}(\drU(w) - \lim_{n\to\infty} \drU(w))$ converges in distribution to $Z$, we note that $\drU$ is a continuous function of $\hat\beta_k, k = 1,\ldots,K$. The result follows from central limit theorem and delta method.

We now prove the other equality. Partition each study evenly into $M$ components and denote the $m$-th component of $X_k$ as $X_{k,m}$ and the responses as $Y_{k,m}$. Let $X_{k,-m}$ and $Y_{k,-m}$ denote the complements of $X_{k,m}$ and $Y_{k,m}$ respectively. The estimated regression coefficients are
$
\hat\beta_{k,m} = (X_{k,-m}^\intercal X_{k,-m})^{-1}X_{k,-m}^\intercal Y_{k,-m}.
$ Define $\delta \hat\beta_{k,m} = \hat\beta_{k,m} - \hat\beta_{k}$, where $\hat\beta_k = (X_k^\intercal X_k)^{-1}X_k^\intercal Y_k$ is the OLS estimate using all data in study $k$.

\begin{lemma}
	\label{lm:delta-beta}
	Under the assumption of Proposition 1, $\|\sum_{m}\delta\hat\beta_{k,m}\|_2 = O_p(1)$.
\end{lemma}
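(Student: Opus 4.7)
The plan is to bound $\sum_m \delta\hat\beta_{k,m}$ by establishing the standard parametric rate for each $\delta\hat\beta_{k,m}$ and exploiting that the number of folds $M$ is fixed. First I would show $\|\delta\hat\beta_{k,m}\|_2 = O_p(n^{-1/2})$, uniformly over $m$. Writing $\delta\hat\beta_{k,m} = (\hat\beta_{k,m} - \beta_k) - (\hat\beta_k - \beta_k)$ and using the Gaussian design assumption $x_{i,k} \overset{iid}{\sim} N(0,I_p)$, Wishart concentration gives that $\lambda_{\min}(X_k^\intercal X_k)$ and $\lambda_{\min}(X_{k,-m}^\intercal X_{k,-m})$ are both of order $n$ with high probability (once $n$ is large relative to $p$). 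Combined with Gaussian noise and the usual OLS error decomposition, this yields $\|\hat\beta_k - \beta_k\|_2, \|\hat\beta_{k,m} - \beta_k\|_2 = O_p(n^{-1/2})$, uniformly over $m = 1, \ldots, M$. Since $M$ is a fixed integer independent of $n$, the triangle inequality then gives
\[
\Bigl\|\sum_{m=1}^M \delta\hat\beta_{k,m}\Bigr\|_2 \;\le\; M\,\max_{m}\|\delta\hat\beta_{k,m}\|_2 \;=\; O_p(n^{-1/2}) \;=\; O_p(1),
\]
which already proves the lemma with room to spare.

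A more refined route, likely needed downstream in the proof of Proposition \ref{prop:cv-eq}, is to derive the closed-form identity
\[
\delta\hat\beta_{k,m} \;=\; -\bigl(X_k^\intercal X_k\bigr)^{-1} X_{k,m}^\intercal\bigl(Y_{k,m} - X_{k,m}\hat\beta_{k,m}\bigr)
\]
via Sherman--Morrison--Woodbury, and then combine it with the full-sample normal equation $\sum_m X_{k,m}^\intercal(Y_{k,m} - X_{k,m}\hat\beta_k) = 0$ and the splitting $Y_{k,m} - X_{k,m}\hat\beta_{k,m} = (Y_{k,m} - X_{k,m}\hat\beta_k) - X_{k,m}\delta\hat\beta_{k,m}$. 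This produces the fixed-point relation $\sum_m \delta\hat\beta_{k,m} = (X_k^\intercal X_k)^{-1}\sum_m X_{k,m}^\intercal X_{k,m}\,\delta\hat\beta_{k,m}$, which exposes an averaging structure in the sum that should plug directly into the expansion of $\drU(w;\bone_K/K) - \wsU(w;\bone_K/K)$ needed for the first part of the proposition.

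The main obstacle is making the Wishart concentration quantitative enough to justify the uniform-in-$m$ statements above; these are classical results under the assumed Gaussian design, so once they are invoked the remaining algebra is routine.
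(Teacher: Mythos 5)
Your primary argument (each $\|\delta\hat\beta_{k,m}\|_2=O_p(n^{-1/2})$ by standard OLS/Wishart concentration, then the triangle inequality over the fixed $M$ folds) is valid and does prove the statement as literally written, since $O_p(n^{-1/2})\subset O_p(1)$. But it misses the point of the lemma: the paper's own proof actually establishes $\|\sum_m\delta\hat\beta_{k,m}\|_2=O_p(1/n)$ (the displayed $O_p(1)$ is an understatement of what the proof delivers), and that finer rate is what gets invoked afterwards --- to show $|\drS_{i,i'}-\wsS_{i,i'}|=O_p(1/n)$ one needs $(\sum_m\delta\hat\beta_{i,m})^\intercal\hat\beta_{i'}=O_p(1/n)$, and a bound of $O_p(n^{-1/2})$ on the sum would only give the trivial rate and would break the first claim of Proposition \ref{prop:cv-eq}. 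The cancellation across folds is the entire content of the lemma, and the triangle inequality discards it. Your secondary ``refined route'' is the right idea and is essentially the paper's argument in streamlined form: the paper starts from the leave-fold-out identity $\hat\beta_{k,m}=\hat\beta_k+(X_k^\intercal X_k)^{-1}X_{k,m}^\intercal(I-P_{k,m})^{-1}(X_{k,m}\hat\beta_k-Y_{k,m})$, expands $(I-P_{k,m})^{-1}$ as a Neumann series, and uses the normal equations $\sum_m X_{k,m}^\intercal(Y_{k,m}-X_{k,m}\hat\beta_k)=0$ to annihilate the leading term, leaving terms that are $O_p(1)$ before the $O_p(1/n)$ factor $(X_k^\intercal X_k)^{-1}$. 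Your fixed-point relation $\sum_m\delta\hat\beta_{k,m}=(X_k^\intercal X_k)^{-1}\sum_m X_{k,m}^\intercal X_{k,m}\,\delta\hat\beta_{k,m}$ achieves the same cancellation without the series: rewriting it as
\begin{equation*}
\Bigl(1-\tfrac{1}{M}\Bigr)\sum_m\delta\hat\beta_{k,m}=\sum_m\Bigl((X_k^\intercal X_k)^{-1}X_{k,m}^\intercal X_{k,m}-\tfrac{1}{M}I_p\Bigr)\delta\hat\beta_{k,m},
\end{equation*}
each summand is a product of two $O_p(n^{-1/2})$ factors, giving $O_p(1/n)$ directly for $M\geq 2$. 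If you promote that sketch to the main proof (the Wishart concentration steps you defer are exactly the paper's display (\ref{eq:prop1-rudi}) and are standard under the Gaussian design), you obtain the rate the downstream argument actually requires, arguably more cleanly than the paper does.
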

\begin{proof}
	First note the following relationship between $\hat \beta_{k,m}$ and $\hat\beta_{k}$ \citep{zhang1993model}:
	\begin{equation}
		\label{eq:prop1-link}
		\begin{aligned}
			\hat\beta_{k,m} = \hat \beta_k + (X_k^\intercal X_k)^{-1}X_{k,m}^\intercal(I_{n/M} - P_{k,m})^{-1}(X_{k,m}\hat \beta_k - Y_{k,m}),
		\end{aligned}
	\end{equation}
	where $P_{k,m} = X_{k,m}(X_{k}^\intercal X_k)^{-1}X_{k,m}^\intercal$ and $I_{n/M}$ is the identity matrix of size $n/M$. With central limit theorem, we have
	\begin{equation}
		\label{eq:prop1-rudi}
		\begin{gathered}
			\left\|\frac{X_k^\intercal X_k}{n} - I_p\right\|_F = O_p(\frac{1}{\sqrt{n}}),~
			\left\|\frac{X_{k,m}^\intercal X_{k,m}}{n} - \frac{1}{M}I_p\right\|_F = O_p(\frac{1}{\sqrt{n}}),~\|\delta\hat\beta_{k,m}\|_2 = O_p(\frac{1}{\sqrt{n}}),
		\end{gathered}
	\end{equation}
	where $\|\cdot\|_F$ is the Frobenius norm. From (\ref{eq:prop1-link}), it follows
	\begin{align*}
		\sum_m\delta\hat\beta_{k,m} = &\frac{(X_k^\intercal X_k)^{-1}}{M}\sum_mX_{k,m}^\intercal(I_{n/M} - P_{k,m})^{-1}(X_{k,m}\hat \beta_k - Y_{k,m})\\
		=&\frac{(X_k^\intercal X_k)^{-1}}{M}\sum_mX_{k,m}^\intercal(I_{n/M} + \sum_l P_{k,m}^l)(X_{k,m}\hat \beta_k - Y_{k,m})\\
		=&\frac{(X_k^\intercal X_k)^{-1}}{M}\sum_m\sum_l X_{k,m}^\intercal P_{k,m}^l(X_{k,m}\hat \beta_k - Y_{k,m}),
	\end{align*}
	where the second equation holds since the spectral norm (maximum eigenvalue) of $P_{k,m}$ is smaller than 1 when $n$ is large. Indeed, the spectral norm converges to $1/M$ as $n\to\infty$. The last equation holds since $(\sum_m X_{k,m}^\intercal X_{k,m})\hat\beta_k = X_k^\intercal Y_k = \sum_m X_{k,m}^\intercal Y_{k,m}$.
	
	We prove that $\|\sum_m X_{k,m}^\intercal P_{k,m}^l(X_{k,m}\hat \beta_k - Y_{k,m})\|_2 = O_p(1)$ for $l = 1,2, \ldots$, which will prove the lemma immediately. Indeed, since $Y_k = X_k\beta_k + \epsilon_k$, where $\epsilon_k = (\epsilon_{i,k};i=1,\ldots,n)$, we have
	\begin{align*}
		&\sum_m X_{k,m}^\intercal P_{k,m}^l(X_{k,m}\hat \beta_k - Y_{k,m}) \\
		= &(\sum_m X_{k,m}^\intercal P_{k,m}^lX_{k,m})(X_k^\intercal X_k)^{-1}X_k^\intercal \epsilon_k - \sum_m X_{k,m}^\intercal P_{k,m}^l\epsilon_{k,m},
	\end{align*}
	where $\epsilon_{k,m}$ is the $m$-th component of $\epsilon_k$. Using the assumption in Proposition 1, we know $\|X_{k,m}^\intercal\epsilon_{k,m}\|_2 = O_p(\sqrt{n})$ since $X_{k,m}^\intercal\epsilon_{k,m}$ is mean zero and its variance-covariance matrix is $n\sigma^2/MI_p$. From (\ref{eq:prop1-rudi}), we also have
	\begin{gather*}
		\|X_{k,m}^\intercal X_{k,m}(X_k^\intercal X_k)^{-1} - I_p/M\|_F = O_p(1/\sqrt{n})\\
		\|\sum_m X_{k,m}^\intercal P^l_{k,m} X_{k,m} - X_{k}^\intercal X_{k}/M^l\|_F = O_p(\sqrt{n}),\\
		\|\sum_m X_{k,m}^\intercal P_{k,m}^l \epsilon_{k,m} - X_k^\intercal \epsilon_k/M^l\|_2 = O_p(1).
	\end{gather*}
	It then follows that
	$$
	\left\|(\sum_m X_{k,m}^\intercal P^l_{k,m}X_{k,m})(X_k^\intercal X_k)^{-1}X_k^\intercal \epsilon_k - \sum_m X_{k,m}^\intercal P_{k,m}^l\epsilon_{k,m}\right\|_2 = O_p(1).
	$$
	The proof is completed by noting that $\|(X_k^\intercal X_k)^{-1}\|_F = O_p(1/n)$.
\end{proof}

We know
\begin{gather*}
	\drS_{i,i'} = (Kn)^{-1}\sum_{k=1}^K\hat \beta_{i}^\intercal X_{k}^\intercal X_{k} \hat\beta_{i'},~\drbb_i = (Kn)^{-1}\sum_{k=1}^K\hat \beta_i^\intercal X_{k}^\intercal Y_k,\\
	\wsS_{i,i'} = (Kn)^{-1}\sum_{k=1}^K\sum_{m=1}^M \hat \beta_{i,m}^\intercal X_{k,m}^\intercal X_{k,m} \hat\beta_{i',m},~\wsb_i = (Kn)^{-1}\sum_{k=1}^K\sum_{m=1}^M \hat \beta_{i,m}^\intercal X_{k,m}^\intercal Y_{k,m}.
\end{gather*}
It follows that
\begin{align*}
	n^{-1}\sum_m\hat\beta_{i,m}^\intercal X_{k,m}^\intercal X_{k,m} \hat\beta_{i', m} = &\hat\beta_i(n^{-1}X_{k}^\intercal X_{k})\hat\beta_{i'} + 2\sum_m \delta\hat\beta_{i,m}^\intercal(n^{-1}X_{k,m}^\intercal X_{k,m})\hat\beta_{i'} \\
	+& \sum_m \delta\hat\beta_{i,m}^\intercal(n^{-1}X_{k,m}^\intercal X_{k,m})\delta\hat\beta_{i',m}.
\end{align*}
(\ref{eq:prop1-rudi}) indicates that $\|\sum_m \delta\hat\beta_{i,m}^\intercal(n^{-1}X_{k,m}^\intercal X_{k,m})\delta\hat\beta_{i',m}\|_2 = O_p(1/n)$. And by invoking Lemma \ref{lm:delta-beta}, $\|\sum_m \delta\hat\beta_{i,m}^\intercal(n^{-1}X_{k,m}^\intercal X_{k,m})\hat\beta_{i'}\|_2 = \|\sum_m \delta\hat\beta_{i,m}^\intercal\hat\beta_{i'} + \sum_m \delta\hat\beta_{i,m}^\intercal(n^{-1}X_{k,m}^\intercal X_{k,m}-I_p/M)\hat\beta_{i'}\|_2=O_p(1/n)$. Therefore, $|\drS_{i,i'} - \wsS_{i,i'}| = O_p(1/n)$ for $i,i'=1,\ldots,K$. Similarly, we can prove $|\drbb_i - \wsb_i| = O_p(1/n)$ for $i = 1,\ldots,K$.

\section{Proof of Proposition~2}

Define $\hat\psi(w)$ as
$$
\drp(w) = \drU(w) - \frac{\sum_k n_k^{-1}Y_k^\intercal Y_k}{K} + \int_{y} y^2dP_0(y).
$$
Similarly, let $\csp(w) = \csU(w) - K^{-1}\sum_kn_k^{-1}Y_k^\intercal Y_k + \int_{y} y^2dP_0(y)$.

We first prove the following lemma on upper bounds of two differences $|\psi(\drw) - \psi(w_g^0)|$ and $|\psi(\csw) - \psi(w_g^0)|$.

\begin{lemma}
	$|\psi(\drw) - \psi(w_g^0)|$ and $\psi(\csw) - \psi(w_g^0)$ can be bounded as follows.
	\begin{gather*}
		|\psi(\drw) - \psi(w_g^0)| \leq 2\sup_{w\in W}|\psi(w) - \drp(w)|,\\
		|\psi(\csw) - \psi(w_g^0)| \leq 2\sup_{w\in W}|\psi(w) - \csp(w)|.
	\end{gather*}
	\label{lm:bound}
\end{lemma}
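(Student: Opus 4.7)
The plan is to use the standard excess-risk decomposition from empirical risk minimization; the lemma is a purely deterministic ``add and subtract'' argument once the optimality properties of $\drw$, $\csw$ and $w_g^0$ are unpacked.

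First I would verify that $\drp$ and $\drU(\,\cdot\,;K^{-1}\bone_K)$ share the same optimizing weight. By construction $\drp(w) = \drU(w;K^{-1}\bone_K) - K^{-1}\sum_k n_k^{-1} Y_k^\intercal Y_k + \int y^2\, dP_0(y)$, and the subtracted and added terms are independent of $w$, so $\drw$ is the optimizer of $\drp$ over $W$ (in the MSE form of (\ref{eq:quadratic-form})). Analogously $\csw$ is the optimizer of $\csp$, and by definition $w_g^0$ is the optimizer of $\psi$.

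Next I would write the telescoping identity
\[
\psi(\drw) - \psi(w_g^0) = \bigl[\psi(\drw) - \drp(\drw)\bigr] + \bigl[\drp(\drw) - \drp(w_g^0)\bigr] + \bigl[\drp(w_g^0) - \psi(w_g^0)\bigr].
\]
The middle bracket is nonpositive by optimality of $\drw$ for $\drp$, while each outer bracket is bounded in absolute value by $\sup_{w\in W}|\psi(w) - \drp(w)|$. Combined with $\psi(\drw) \ge \psi(w_g^0)$, which comes from optimality of $w_g^0$ for $\psi$, the two-sided inequality $0 \le \psi(\drw)-\psi(w_g^0) \le 2\sup_{w\in W}|\psi(w)-\drp(w)|$ follows. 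The second bound is obtained by the identical argument with $\drp$ replaced by $\csp$ and $\drw$ by $\csw$.

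The main obstacle is essentially bookkeeping: keeping the sign convention consistent between $u(y,y') = -(y-y')^2$ (maximization) and the MSE form of $\drU$ in (\ref{eq:quadratic-form}) (minimization), and observing that the canceling constants in the definition of $\drp$ are precisely those needed so that $\drp$ has the same quadratic shape as $\psi$, with $\drS$, $\drb$ in place of $\Sigma$, $b$ (and likewise $\csS$, $\csb$ for $\csp$). There is no probabilistic content at this step; the substantive probability enters only afterwards, when Proposition \ref{prop:bound} is proved by controlling $\sup_w|\psi(w)-\drp(w)|$ and $\sup_w|\psi(w)-\csp(w)|$ via (A1)--(A2) and concentration of $\drS-\Sigma$, $\drb-b$, $\csS-\Sigma$, $\csb-b$.
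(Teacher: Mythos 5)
Your proposal is correct and is essentially the paper's own proof: the same telescoping decomposition $\psi(\drw)-\psi(w_g^0) = [\psi(\drw)-\drp(\drw)] + [\drp(\drw)-\drp(w_g^0)] + [\drp(w_g^0)-\psi(w_g^0)]$, with the middle term handled by optimality of $\drw$ for $\drp$ and nonnegativity of $\psi(\drw)-\psi(w_g^0)$ from optimality of $w_g^0$ for $\psi$, and the outer terms bounded by the supremum. Your remark on the sign/constant bookkeeping (the $w$-independent terms in the definition of $\drp$ and $\csp$) is the only detail the paper leaves implicit, and you have it right.
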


\begin{proof}
	We prove the inequality for $\drw$ and similar steps can be followed to verify the other inequality. Note that
	$$
	\psi(\drw) - \psi(w_g^0) = \psi(\drw) - \drp(\drw) + \drp(\drw) - \drp(w_g^0) +  \drp(w_g^0) - \psi(w_g^0).
	$$
	By definition $\psi(\drw) - \psi(w^0_g) \geq 0$ and $\drp (\drw) - \drp(w^0_g)\leq 0$, therefore
	$$
	|\psi(\drw) - \psi(w^0_g)| \leq |\psi(\drw) - \drp(\drw)| + |\drp(w^0_g) - \psi(w^0_g)| \leq 2\sup_{w\in W}|\psi(w) - \drp(w)|.
	$$
\end{proof}

When $W = \Delta_{K-1}$, we have
$
\sup_{w\in W}|\psi(w) - \drp(w)|  \leq \|\text{vec}(\Sigma-\drS)\|_{\infty} + \|b-\drbb\|_{\infty},
$
where $\|\cdot\|_{\infty}$ is the $L^{\infty}$-norm of a vector and $\text{vec}(\cdot)$ is the vectorization of a matrix. With Lemma \ref{lm:bound}, it follows
\begin{equation}
	\mathbb E[\psi(\drw) - \psi(w_g^0)] \leq 2\mathbb E\|\text{vec}(\Sigma-\drS)\|_{\infty} + 2\mathbb E\|b-\drbb\|_{\infty}.
\end{equation}
Similar results hold for $\mathbb E[\psi(\csw) - \psi(w_g^0)]$.
The following lemma provides an upper bound for $\mathbb E\|\text{vec}(\Sigma-\drS)\|$ and $\mathbb E\|\text{vec}(\Sigma-\csS)\|$.

\begin{lemma}
	If assumption A1 and A2 hold, we have the following bounds for $\mathbb E\|\text{vec}(\Sigma-\drS)\|_{\infty}$ and $\mathbb E\|\text{vec}(\Sigma-\csS)\|_{\infty}$.
	$$
	\begin{aligned}
		\mathbb E\|\text{vec}(\Sigma-\drS)\|_{\infty} \leq& 4\sqrt{2e}M_1^2\sqrt{\log(KL)/K} + 2M_1M_2(\min_k n_k)^{-\min_\ell p_\ell},\\
		\mathbb E\|\text{vec}(\Sigma-\csS)\|_{\infty} \leq& (4\sqrt{2e}+8)M_1^2\sqrt{\log(KL)/K} + 2M_1M_2(\min_k n_k)^{-\min_\ell p_\ell}.
	\end{aligned}
	$$
	\label{lm:sig-bound}
\end{lemma}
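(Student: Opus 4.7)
The plan is to prove Lemma 2 via the decomposition
$$\drS_{k,k';\ell,\ell'} - \Sigma_{k,k';\ell,\ell'} = \underbrace{(\drS_{k,k';\ell,\ell'} - T_{k,k';\ell,\ell'})}_{\text{sampling error}} + \underbrace{(T_{k,k';\ell,\ell'} - \Sigma_{k,k';\ell,\ell'})}_{\text{approximation error}},$$
where $T_{k,k';\ell,\ell'} = \int \hat Y_k^\ell(x)\hat Y_{k'}^{\ell'}(x)dF_X(x)$ denotes the \emph{population} inner product of the trained SPFs, and the analogous decomposition for $\csS$. The approximation error captures the gap between $\hat Y_k^\ell$ and its limit $Y_k^\ell$, while the sampling error measures how well the empirical inner product estimates $T$.

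For the approximation error I would use the identity $\hat Y_k^\ell \hat Y_{k'}^{\ell'} - Y_k^\ell Y_{k'}^{\ell'} = \hat Y_k^\ell(\hat Y_{k'}^{\ell'} - Y_{k'}^{\ell'}) + (\hat Y_k^\ell - Y_k^\ell)Y_{k'}^{\ell'}$ together with (A1) to bound each factor by $M_1$, then apply Cauchy--Schwarz in $L^2(F_X)$ and invoke (A2) to obtain
$$\mathbb E|T_{k,k';\ell,\ell'} - \Sigma_{k,k';\ell,\ell'}| \leq 2M_1\sqrt{M_2}(\min_k n_k)^{-\min_\ell p_\ell},$$
which contributes the second term uniformly across $(k,k',\ell,\ell')$. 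A parallel argument applies for $\csS$.

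For the sampling error $\drS_{k,k';\ell,\ell'} - T_{k,k';\ell,\ell'} = K^{-1}\sum_i \bigl[n_i^{-1}\sum_j \hat Y_k^\ell(x_{j,i})\hat Y_{k'}^{\ell'}(x_{j,i}) - T_{k,k';\ell,\ell'}\bigr]$ I would condition on the training data that determines $\hat Y_k^\ell, \hat Y_{k'}^{\ell'}$ and split the outer sum into $i\in\{k,k'\}$ and $i\notin\{k,k'\}$. For the in-sample indices $i \in \{k,k'\}$ the summands are biased but uniformly bounded by $2M_1^2$ (by A1), contributing at most $4M_1^2/K$. For $i\notin\{k,k'\}$ the $x_{j,i}$ are iid from $F_X$ independent of $\hat Y_k^\ell, \hat Y_{k'}^{\ell'}$, so the summands are conditionally mean-zero and bounded by $2M_1^2$; Hoeffding's inequality then gives conditional sub-Gaussianity with parameter $O(M_1^2/\sqrt{K})$. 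A maximal inequality across the $(KL)^2$ index pairs produces an $\mathbb E\sup$ bound of order $M_1^2\sqrt{\log(KL)/K}$, yielding the constant $4\sqrt{2e}$ after optimizing the tail-integration constants. The same argument for $\csS$ has no in-sample contribution (those indices are excluded), but the normalization $(K-2)K/(K-1)^2$ differs from 1 by $1/(K-1)^2$, producing an additional deterministic bias of order $M_1^2/K$ that is absorbed into an enlarged constant (hence the $4\sqrt{2e}+8$). Finally the bound for $\|b - \drb\|_\infty$ and $\|b - \csb\|_\infty$ follows by the same two-term decomposition using $|y_{i,k}| \leq M_1$, which (combined with Lemma 1 above) completes the proof of Proposition 2.

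The main obstacle is the in-sample contribution for DR stacking: when $i\in\{k,k'\}$ the empirical inner product is not an unbiased estimate of $T$, so the naive Hoeffding argument fails for those two terms. The resolution is to exploit the extra $1/K$ factor from the outer average so that even the worst-case deterministic bound of $2M_1^2$ per term is negligible relative to the fluctuation rate $M_1^2\sqrt{\log(KL)/K}$; one must verify that this absorption is uniform in $(k,k',\ell,\ell')$ so that the sup-norm bound does not pick up an extra $\sqrt{\log(KL)}$ factor. A secondary technical issue is establishing that the conditioning on $\hat Y_k^\ell$ genuinely preserves the independence of the $x_{j,i}$ for $i \notin\{k,k'\}$, which requires appealing to the product structure of the sampling model in (\ref{eq:hier-mod}).
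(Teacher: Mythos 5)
Your proposal is correct and lands on the same skeleton as the paper's proof (peel off the $\hat Y\to Y$ approximation error via A1--A2, then control an $\ell_\infty$-norm of a centered empirical average at rate $\sqrt{\log(KL)/K}$), but it differs in two substantive ways. First, the centering: you compare $\drS$ to $T_{k,k';\ell,\ell'}=\int \hat Y_k^\ell\hat Y_{k'}^{\ell'}\,dF_X$, so the in-sample studies $i\in\{k,k'\}$ contribute genuinely biased summands and you must absorb a deterministic $O(M_1^2/K)$ term into the fluctuation rate --- the "main obstacle" you flag. The paper instead first replaces $\hat Y_k^\ell$ by its limit $Y_k^\ell$ inside the empirical sums; since $Y_k^\ell$ depends only on $f_k$ and not on the sample $\{x_{i,k}\}$, the residual summands $Y_k^\ell(x_{i,s})Y_{k'}^{\ell'}(x_{i,s})-\langle Y_k^\ell,Y_{k'}^{\ell'}\rangle$ are conditionally mean-zero for \emph{every} $s$, including $s\in\{k,k'\}$, and the in-sample bias issue never arises. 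Second, the concentration tool: the paper invokes the Juditsky--Nemirovski recursion for $V(z)=\tfrac12\|z\|_q^2$ with $q=2\log M$ (Lemma 2.1 of \cite{juditsky2000functional}), which is where $4\sqrt{2e}$ comes from, whereas you use Hoeffding plus a sub-Gaussian maximal inequality over the $(KL)^2$ coordinates; this gives the same $\sqrt{\log(KL)/K}$ rate and, once your extra $4M_1^2/K$ is folded in, a constant still below $4\sqrt{2e}$, so the stated bound survives. Two small points to tighten: (i) your Cauchy--Schwarz correctly yields $\sqrt{M_2}$ where the paper writes $M_2$ (A2 bounds a second moment, so your version is actually the more defensible one); (ii) both you and the paper pass from a per-coordinate bound on the approximation error to a bound on its expected $\ell_\infty$-norm over $(KL)^2$ indices without justification --- $\mathbb E\max$ is not $\max\mathbb E$ --- so if you want a fully rigorous argument you should either strengthen A2 to a uniform-in-$(k,\ell)$ statement or pay a union-bound factor there as well.
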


\begin{proof}
	First note that
	$$
	\drS_{k,\ell;k',\ell'} - \Sigma_{k,\ell;k',\ell'} = K^{-1}\sum_{s=1}^K n_s^{-1}\sum_{i=1}^{n_s} \left(\hat Y_{k}^{\ell}(x_{i,s})\hat Y_{k'}^{\ell'}(x_{i,s}) - \int _x Y_{k}^{\ell}(x)Y_{k'}^{\ell'}(x)dF_X(x)\right).
	$$
	Denote $\int _x Y_{k}^{\ell}(x)Y_{k'}^{\ell'}(x)dF_X(x)$ as $\langle Y_{k}^{\ell}, Y_{k'}^{\ell'}\rangle$, we have
	\begin{equation}
		\begin{aligned}
			\|\drS - \Sigma\|_{\infty} \leq & K^{-1}\sum_s n_s^{-1}\sum_i \left\|\left(\hat Y_k^\ell(x_{i,s})\left(\hat Y_{k'}^{\ell'}(x_{i,s}) - Y_{k'}^{\ell'}(x_{i,s})\right);k,k'\leq K, l,l'\leq L\right)\right\|_{\infty}\\
			+&K^{-1}\sum_s n_s^{-1}\sum_i \left\|\left(Y_{k'}^{\ell'}(x_{i,s})\left(\hat Y_k^\ell(x_{i,s})- Y_k^\ell(x_{i,s})\right);k,k'\leq K, l,l'\leq L\right)\right\|_{\infty}\\
			+&K^{-1}\left\|\left(\sum_s n_s^{-1}\sum_i \left(Y_k^\ell(x_{i,s})Y_{k'}^{\ell'}(x_{i,s}) - \langle Y_k^\ell, Y_{k'}^{\ell'}\rangle\right);k,k'\leq K, l,l'\leq L\right)\right\|_{\infty}
		\end{aligned}
		\label{eq:bound-inter1}
	\end{equation}
	By assumption A1, we have
	$$
	\left|\hat Y_k^\ell(x_{i,s})\left(\hat Y_{k'}^{\ell'}(x_{i,s}) - Y_{k'}^{\ell'}(x_{i,s})\right)\right|\leq M_1|\hat Y_{k'}^{\ell'}(x_{i,s}) - Y_{k'}^{\ell'}(x_{i,s})|.
	$$
	Combined with assumption $A2$, we have
	$$
	\mathbb E\left\|\left(\hat Y_k^\ell(x_{i,s})\left(\hat Y_{k'}^{\ell'}(x_{i,s}) - Y_{k'}^{\ell'}(x_{i,s})\right);k,k'\leq K, l,l'\leq L\right)\right\|_{\infty} \leq M_1M_2 (\min_k n_k)^{-\min_\ell p_\ell}.
	$$
	The same upper bound holds for the second term on the right-hand side of (\ref{eq:bound-inter1}).
	
	Define vector $\alpha_{i,s} = \left( Y_{k}^{\ell}(x_{i,s})Y_{k'}^{\ell'}(x_{i,s}) - \langle Y_{k}^{\ell}, Y_{k'}^{\ell'}\rangle;k,k'\leq K, l,l'\leq L\right)$. Based on Lemma 2.1 in \cite{juditsky2000functional}, we have
	\begin{align*}
		V\left(\sum_{s=1}^K n_s^{-1}\sum_{i=1}^{n_s} \alpha_{i,s}\right) \leq &V\left(\sum_{s=1}^{K-1} n_s^{-1}\sum_{i=1}^{n_s} \alpha_{i,s}\right) + (n_{K}^{-1}\sum_{i=1}^{n_K} \alpha_{i,K})^\intercal\nabla V\left(\sum_{s=1}^{K-1}n_s^{-1}\sum_i\alpha_{i,s}\right) \\
		+&c^*(M)\|n_K^{-1}\sum_i\alpha_{i,K}\|_{\infty}^2,
	\end{align*}
	where $M=K^2L^2$, $c^*(M) = 4e\log M$, $V(z) = 1/2\|z\|_q^2:\mathbb R^M\to\mathbb R$ and $q=2\log M$.
	It follows that
	\begin{equation}
		\mathbb E\left[V\left(\sum_{s=1}^K n_s^{-1}\sum_{i=1}^{n_s} \alpha_{i,s}\right)\right] \leq \mathbb E\left[V\left(\sum_{s=1}^{K-1} n_s^{-1}\sum_{i=1}^{n_s} \alpha_{i,s}\right)\right] + c^*(M)\mathbb E \|n_K^{-1}\sum_i\alpha_{i,K}\|_{\infty}^2,
		\label{eq:recur}
	\end{equation}
	since $\alpha_{i,s}$ and $\alpha_{i,s'}$ are independent conditioned on $f_1,\ldots,f_K$ when $s\neq s'$ and $\mathbb E(\alpha_{i,k})=0$. The inequality in (\ref{eq:recur}) implies a recursive relationship and repeatedly applying for $K$ times we get
	$$
	\mathbb E\left[V\left(\sum_{s=1}^K n_s^{-1}\sum_{i=1}^{n_s} \alpha_{i,s}\right)\right] \leq c^*(M)\sum_{s=1}^K n_s^{-2}\mathbb E\|\sum_{i}\alpha_{i,s}\|_{\infty}^2.
	$$
	
	By assumptions A1 and A2 again, we have
	$
	\left|Y_{k}^{\ell}(x_{i,s})Y_{k'}^{\ell'}(x_{i,s}) - \langle Y_{k}^{\ell},Y_{k'}^{\ell'} \rangle\right|\leq 2M_1^2,~a.e.
	$
	Therefore,
	$$
	\mathbb E\left[V\left(\sum_{s=1}^K n_s^{-1}\sum_{i=1}^{n_s} \alpha_{i,s}\right)\right] \leq c^*(M)4KM_1^4 = 32e\log(KL)KM_1^4.
	$$
	Since $V(z)\geq 1/2\|z\|_{\infty}^2$, it follows
	$$
	K^{-1}\mathbb E\|\sum_s n_s^{-1}\sum_i \alpha_{i,s}\|_{\infty}\leq K^{-1}\sqrt{32e\log(KL)KM_1^4} = 4\sqrt{2e}M_1^2\sqrt{\log(KL)/K}.
	$$
	
	The above steps also apply to prove of the bound of $\mathbb E\|\csS - \Sigma\|_\infty$ by noting that
	\begin{align*}
		|\csS_{k,\ell;k',\ell'} - \Sigma_{k,\ell;k',\ell'}| \leq& \left|\sum_s\sum_in_s^{-1}\left(\hat Y_k^\ell(x_{i,s})\hat Y_{k'}^{\ell'}(x_{i,s})- \langle Y_k^\ell, Y_{k'}^{\ell'}\rangle\right)\right| + \frac{4M_1^2}{K-1}\\
		\leq& \left|\sum_s\sum_in_s^{-1}\left(\hat Y_k^\ell(x_{i,s})\hat Y_{k'}^{\ell'}(x_{i,s})- \langle Y_k^\ell, Y_{k'}^{\ell'}\rangle\right)\right| + 8M_1^2\sqrt{\log(KL)/K}.
	\end{align*}
\end{proof}

\begin{lemma}
	If assumption A1 and A2 hold, then
	$$
	\begin{aligned}
		\mathbb E\|b - \drbb\|_{\infty} &\leq M_1M_2(\min_k n_k)^{-\min_\ell p_\ell} + (8\sqrt{2e} + 2) M_1^2\sqrt{\log(KL)/K},\\
		\mathbb E\|b - \csb\|_{\infty} &\leq 2M_1M_2(\min_k n_k)^{-\min_\ell p_\ell} + (16\sqrt{2e} + 6) M_1^2\sqrt{\log(KL)/K}.
	\end{aligned}
	$$
	\label{lm:b-bound}
\end{lemma}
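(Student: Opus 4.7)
The plan is to mirror the two-piece decomposition used for $\mathbb E\|\Sigma-\drS\|_\infty$ in Lemma~\ref{lm:sig-bound}, now applied to the cross-product between the SPFs and the response $y$. First I would write
\begin{equation*}
\drb_{k;\ell} - b_{k;\ell} = \frac{1}{K}\sum_s \frac{1}{n_s}\sum_i \bigl[\hat Y_k^\ell(x_{i,s}) - Y_k^\ell(x_{i,s})\bigr] y_{i,s} + \frac{1}{K}\sum_s \frac{1}{n_s}\sum_i \bigl[Y_k^\ell(x_{i,s}) y_{i,s} - b_{k;\ell}\bigr],
\end{equation*}
then pass to the $\ell_\infty$-norm over $(k,\ell)\in\{1,\ldots,K\}\times\{1,\ldots,L\}$ and bound the two summands separately.

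The first summand is an SPF-approximation error: since $|y_{i,s}|\leq M_1$ by A1, the same reasoning used for the first two lines of equation~(\ref{eq:bound-inter1}) combined with A2 contributes at most $M_1 M_2(\min_k n_k)^{-\min_\ell p_\ell}$. The second summand is a centered empirical average of the $KL$-dimensional vectors $\gamma_{i,s}=(Y_k^\ell(x_{i,s}) y_{i,s}-b_{k;\ell})_{k\leq K,\ell\leq L}$, each coordinate deterministically bounded by $2M_1^2$ by A1. Applying Lemma 2.1 of \cite{juditsky2000functional} with potential $V(z)=\tfrac12\|z\|_q^2$, $q=2\log(KL)$ and $c^\ast(KL)=4e\log(KL)$, and exploiting conditional independence across $s$ given $(f_1,\ldots,f_K)$ exactly as in the proof of Lemma~\ref{lm:sig-bound}, the telescoping recursion yields a contribution of order $M_1^2\sqrt{\log(KL)/K}$, giving the leading term in the stated bound.

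For $\mathbb E\|b-\csb\|_\infty$ I would deduce the bound from the $\drb$ one via the identity
\[
\csb_{k;\ell} - \drb_{k;\ell} = \frac{1}{K-1}\drb_{k;\ell} - \frac{1}{(K-1)n_k}\sum_i \hat Y_k^\ell(x_{i,k})y_{i,k},
\]
whose $\ell_\infty$-norm is at most $2M_1^2/(K-1)$ by A1 and is absorbed into the leading $\sqrt{\log(KL)/K}$ term for $K\geq 2$. Repeating the two-piece decomposition directly for $\csb-b$ (summing over the $K-1$ held-out studies with weights $1/(K-1)$) inflates the constants by at most a factor of two and accounts for the doubled constants $(16\sqrt{2e}+6)$ and $2M_1M_2$ in the statement.

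The step I expect to be the main technical nuisance is the Juditsky--Nemirovski application. For $s\neq k$ the coordinate $\gamma_{i,s;k,\ell}$ is unconditionally mean-zero because $Y_k^\ell$ is independent of $(x_{i,s}, y_{i,s})$; the single boundary index $s=k$, where $Y_k^\ell$ is coupled to $f_k$, need not have mean zero but contributes a deterministic error of size $O(M_1^2/K)$ that is absorbed into the leading $\sqrt{\log(KL)/K}$ order, mirroring the boundary handling implicit in the proof of Lemma~\ref{lm:sig-bound}.
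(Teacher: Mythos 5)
Your overall architecture matches the paper's: isolate an SPF-approximation term controlled by A1--A2, treat the remaining centered empirical average with Lemma 2.1 of \cite{juditsky2000functional} and the telescoping recursion, peel off the $s=k$ diagonal as a deterministic $O(M_1^2/K)$ term, and deduce the $\csb$ bound from the $\drb$ bound via the exact algebraic relation $\csb_{k;\ell}=\tfrac{K}{K-1}\drb_{k;\ell}-\tfrac{1}{(K-1)n_k}\sum_i\hat Y_k^\ell(x_{i,k})y_{i,k}$. The one structural difference is that you fold the paper's two stochastic terms into a single centered vector $\gamma_{i,s}=(Y_k^\ell(x_{i,s})y_{i,s}-b_{k;\ell})$, whereas the paper decomposes $y_{i,s}=f_s(x_{i,s})+\epsilon_{i,s}$ and runs the recursion separately on $Y_k^\ell(x_{i,s})\epsilon_{i,s}$ and on $Y_k^\ell(x_{i,s})f_s(x_{i,s})-\langle Y_k^\ell,f_0\rangle$. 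If your single recursion were licit you would actually obtain a sharper constant (one factor of $4\sqrt{2e}$ instead of two), and the stated bound would follow a fortiori.

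The gap is in the justification you give for the recursion's cross-term cancellation. You invoke ``conditional independence across $s$ given $(f_1,\ldots,f_K)$ exactly as in the proof of Lemma 3,'' but that conditioning does not make your combined term conditionally mean zero: $\mathbb E\bigl[Y_k^\ell(x_{i,s})y_{i,s}-b_{k;\ell}\mid f_1,\ldots,f_K\bigr]=\langle Y_k^\ell,f_s\rangle-\langle Y_k^\ell,f_0\rangle\neq 0$ in general, because $b_{k;\ell}$ centers against the population mean $f_0=\int f\,dF_f$ while conditioning on $f_s$ freezes the study effect. This is exactly the distinction that makes Lemma 3 work with that conditioning (there the centering $\langle Y_k^\ell,Y_{k'}^{\ell'}\rangle$ integrates only over $F_X$, so $x_{i,s}\sim F_X$ suffices) but forces the paper, in Lemma 4, to split off the noise piece $Y_k^\ell(x_{i,s})\epsilon_{i,s}$ --- which \emph{is} mean zero given all the $f$'s --- and to handle the signal piece with the sequential conditioning on $f_1,\ldots,f_{s-1}$ so that $f_s$ gets integrated out. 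Your combined term inherits the harder structure of the signal piece, so the term $\mathbb E\bigl[T_K^\intercal\nabla V(\sum_{s<K}T_s)\bigr]$ does not vanish under the conditioning you state. The fix is either to adopt the paper's three-way split, or to run your single recursion with the ordered conditioning; as written, the key step of the recursion fails.
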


\begin{proof}
	Note that
	\begin{align*}
		\|\drbb - b\|_\infty \leq &K^{-1}\sum_s n_s^{-1}\sum_i\left\| \left( \hat Y_{k}^{\ell}(x_{i,s})y_{i,s} - Y_{k}^{\ell}(x_{i,s})y_{i,s};k\leq K, \ell\leq L\right)\right\|_\infty +\\
		&K^{-1}\left\|\sum_s n_s^{-1}\sum_i\left( Y_{k}^{\ell}(x_{i,s})(y_{i,s} - f_s(x_{i,s}));k\leq K, \ell\leq L\right) \right\|_\infty+\\
		&K^{-1}\left\|\sum_s n_s^{-1}\sum_i\left(Y_{k}^{\ell}(x_{i,s})f_s(x_{i,s}) - \langle Y_k^\ell, f_0 \rangle;k\leq K, \ell\leq L\right)\right\|_\infty.
	\end{align*}
	
	With assumption A1 and A2, we have
	$$
	\mathbb E\left(K^{-1}\sum_s n_s^{-1}\sum_i\left\| \left( (\hat Y_{k}^{\ell}(x_{i,s}) - Y_{k}^{\ell}(x_{i,s}))y_{i,s};k\leq K, \ell\leq L\right)\right\|_\infty\right) \leq M_1M_2(\min_kn_k)^{-\min_\ell p_\ell}.
	$$
	Since $\epsilon_{i,s} = y_{i,s} - f_s(x_{i,s})$ is independent of $Y_k^\ell$ and $\mathbb E\epsilon_{i,s} = 0$, applying a similar recursive relationship as in (\ref{eq:recur}), we have
	\begin{align*}
		&\mathbb E\left(V\left(\sum_s n_s^{-1}\sum_i\left( Y_{k}^{\ell}(x_{i,s})(y_{i,s} - f_s(x_{i,s}));k\leq K, \ell\leq L\right)\right)\right) \\
		\leq& c^*(M)\sum_{s=1}^Kn_s^{-2}\mathbb E\|\sum_i (Y_k^\ell(x_{i,s})\epsilon_{i,s};k\leq K, \ell\leq L)\|_\infty^2\leq 4Kc^*(M)M_1^4.
	\end{align*}
	The last inequality holds due to Assumption A1. Therefore
	$$
	K^{-1}\mathbb E\left(\left\|\sum_s n_s^{-1}\sum_i\left( Y_{k}^{\ell}(x_{i,s})(y_{i,s} - f_s(x_{i,s}));k\leq K, \ell\leq L\right)\right\|_\infty\right)\leq 4\sqrt{2e}M_1^2\sqrt{\log(KL)/K}.
	$$
	Denote $\tilde\alpha_{i,s} = \left((Y_k^\ell(x_{i,s})f_s(x_{i,s}) - \langle Y_l^\ell, f_0\rangle)\mathbb I(s\neq k); k\leq K, \ell\leq L\right)$. It follows that
	\begin{align*}
		&\left\|\sum_s n_s^{-1}\sum_i\left(Y_{k}^{\ell}(x_{i,s})f_s(x_{i,s}) - \langle Y_k^\ell, f_0 \rangle;k\leq K, \ell\leq L\right)\right\|_{\infty} \\
		\leq & \left\|\sum_s n_s^{-1}\sum_i \tilde\alpha_{i,s}\right\|_\infty + \left\| \left(n_k^{-1}\sum_i(Y_k^\ell(x_{i,k})f_k(x_{i,k}) - \langle Y_k^\ell, f_0\rangle); k\leq K, \ell\leq L \right)\right\|_\infty\\
		\leq & \left\|\sum_s n_s^{-1}\sum_i \tilde\alpha_{i,s}\right\|_\infty + 2M_1^2.
	\end{align*}
	Note that by definition, $\mathbb E(\tilde\alpha_{i,s}) = 0$ and $\tilde\alpha_{i,s}$ is independent to $\tilde\alpha_{i,1},\ldots,\tilde\alpha_{i,s-1}$ conditioning on $f_1,\ldots f_{s-1}$. Applying (\ref{eq:recur}), we have $\left\|\sum_s n_s^{-1}\sum_i \tilde\alpha_{i,s}\right\|_\infty\leq 2\sqrt{Kc^*(M)}M_1^2$ and
	\begin{align*}
		&\left\|\sum_s n_s^{-1}\sum_i\left(Y_{k}^{\ell}(x_{i,s})f_s(x_{i,s}) - \langle Y_k^\ell, f_0 \rangle;k\leq K, \ell\leq L\right)\right\|_{\infty} \\
		\leq& 4\sqrt{2e}M_1^2\sqrt{\log(KL)/K} + 2M_1^2/K \leq (4\sqrt{2e}+2)M_1^2\sqrt{\log(KL)/K}.
	\end{align*}
	Therefore, we have
	$$
	\mathbb E\|\drbb - b\|_\infty \leq M_1M_2(\min_k n_k)^{-\min_\ell p_\ell} + (8\sqrt{2e}+2)M_1^2\sqrt{\log(KL)/K}.
	$$
	
	The proof is completed by noting that
	\begin{align*}
		\|\csb - b\|_\infty =& \left\|\frac{K}{K-1}(\drbb - b) + \frac{1}{K-1}\left(n_k^{-1}\sum_{i=1}^{n_k} \hat Y_k^\ell(x_{i,k})y_{i,k};k\leq K, \ell\leq L\right)\right\|_\infty\\
		\leq & \frac{K}{K-1}\|\drbb - b\|_\infty + \frac{M_1^2}{K-1}\leq 2\|\drbb - b\|_\infty + 2M_1^2\sqrt{\log(KL)/K}.
	\end{align*}
\end{proof}

Combining the results in Lemma \ref{lm:sig-bound} and Lemma \ref{lm:b-bound}, we have
\begin{align*}
	2\mathbb E\|\text{vec}(\Sigma - \drS)\|_\infty + 2 \mathbb E\|b - \drbb\|_\infty& \leq 6M_1M_2(\min_k n_k)^{-\min_\ell p_\ell} + (24\sqrt{2e}+4)M_1^2\sqrt{\log(KL)/K},\\
	2\mathbb E\|\text{vec}(\Sigma - \csS)\|_\infty + 2 \mathbb E\|b - \csb\|_\infty& \leq 8M_1M_2(\min_k n_k)^{-\min_\ell p_\ell} + (40\sqrt{2e}+28)M_1^2\sqrt{\log(KL)/K}.
\end{align*}
The proof is completed by applying Lemma \ref{lm:bound} on the above two inequalities.

\section{Proof of Proposition 3}
When $n\to \infty$, we have $\hat Y_k(x) \to \beta_k^\intercal x = Y_k(x)$. Therefore $\lim\limits_{n\to\infty}\psi(w) = (\bbeta w)^\intercal (\bbeta w) - 2(\bbeta w)^\intercal \beta_0 + (\|\beta_0\|_2^2 + p\sigma^2_\beta + 1)$, where $\bbeta = (\beta_1,\ldots,\beta_K)$, and
\begin{align*}
	& \drU(w) \to (\bbeta w)^\intercal(\bbeta w) - 2(\bbeta w)^\intercal \bar\beta + (\|\beta_0\|_2^2 + p\sigma_\beta^2 + 1),\\
	& \csU(w) \to \frac{K^2-2K}{(K-1)^2}(\bbeta w)^\intercal(\bbeta w) + \frac{K}{(K-1)^2}w^\intercal D w - \frac{2(\bbeta w)^\intercal \bbeta\bone_K - 2w^\intercal D \bone_K}{K-1} + (\|\beta_0\|_2^2 + p\sigma_\beta^2 + 1),
\end{align*}
where $\bar\bbeta = K^{-1}\sum_k \beta_k$ and $D = \text{diag}\{\|\beta_k\|_2^2,k\leq K\}$. If follows that $\drw = \bone_K/K$ and
$$
\csw = \frac{K-1}{K}\left((K-2)\bbeta^\intercal \bbeta + D\right)^{-1}(\bbeta^\intercal\bbeta - D)\bone_K.
$$
Since $K>2$, with Woodbury matrix identity
$$
\left((K-2)\bbeta^\intercal \bbeta + D\right)^{-1} = D^{-1} - D^{-1}\bbeta^\intercal\left(\frac{I_p}{K-2} + \bbeta D^{-1}\bbeta^\intercal\right)^{-1}\bbeta D^{-1},
$$
hence
$$
\csw = \frac{K-1}{K}\left(I_K - D^{-1}\bbeta^\intercal\left(\frac{I_p}{K-2} + \bbeta D^{-1}\bbeta^\intercal\right)^{-1}\bbeta\right)(D^{-1}\bbeta^\intercal\bbeta - I_K)\bone_K,
$$
and 
$$
\bbeta\csw = \frac{K-1}{K-2}\left(I_p - \frac{K-1}{K-2}\left(\bbeta D^{-1}\bbeta^\intercal + \frac{I_p}{K-2}\right)^{-1}\right)\bar\bbeta.
$$
We have
\begin{align*}
	\lim\limits_{n\to\infty}(\psi(\drw) - \psi(\csw)) = \|\bbeta\drw - \beta_0\|_2^2 -  \|\bbeta\csw - \beta_0\|_2^2.
\end{align*}

If $\beta_0 = \bm 0$, we only need to study the case where $\sigma_\beta = 1$ since $\mathbb E(\psi(\drw) - \psi(\csw)) = \sigma_\beta^2 \mathbb E(\psi(\drw_1) - \psi(\csw_1))$, where $\drw_1$ and $\csw_1$ are the stacking weights derived when $\sigma_\beta = 1$. Note in this case $\lim\limits_{n\to\infty}(\psi(\drw) - \psi(\csw)) = \|\bbeta \drw\|_2^2 - \|\bbeta \csw\|_2^2$. Denote $\tilde\beta_k = \beta_k/\|\beta_k\|_2$ and $\tilde\bbeta = (\tilde\beta_1,\ldots,\tilde\beta_K)$. Denote all non-zero $K$ eigenvalues and the corresponding eigenvectors of $\bbeta D^{-1}\bbeta = \tilde\bbeta\tilde\bbeta^\intercal$ as $\lambda_1,\ldots,\lambda_K$ and $v_1,\ldots,v_K$. It follows that
$$
\frac{K-1}{K-2}\left(I_p - \frac{K-1}{K-2}\left(\bbeta D^{-1}\bbeta^\intercal + \frac{I_p}{K-2}\right)^{-1}\right) = \sum_{k=1}^K \frac{(\lambda_k-1)(K-1)}{\lambda_k(K-2)+1}v_kv_k^\intercal,
$$
and
$$
\bbeta\csw = \sum_{k=1}^K \frac{(\lambda_k-1)(K-1)}{\lambda_k(K-2)+1}v_k\omega_k,
$$
where $\omega_k = v_k^\intercal \bar\bbeta$. It follows that
$$
\lim\limits_{n\to\infty}\left(\psi(\csw) - \psi(\drw)\right) = \sum_{k=1}^K \left(\left(\frac{(\lambda_k-1)(K-1)}{\lambda_k(K-2)+1}\right)^2-1\right)\omega_k^2.
$$
We note that $\tilde\beta_k$ follows a uniform distribution on a unit sphere. Based on the properties of spherical uniform distribution \citep{WatsonGeoffreyS1983Sos}, we can show that when $K\leq p/2$,
$$
\sum_{k=1}^K\mathbb E\left[\left(\left(\frac{(\lambda_k-1)(K-1)}{\lambda_k(K-2)+1}\right)^2-1\right)\omega_k^2\right]< 0
$$
Therefore when $\beta_0 = 0$, $\lim\limits_{n\to\infty}\mathbb E(\psi(\drw) - \psi(\csw))>0$ for any $K>2$.

If $\beta_0 \neq \bzero$, denote $\tilde\beta_0 = \beta_0/\|\beta_0\|_2$. When $\sigma_\beta = 0$,
$$
(\bbeta D^{-1}\bbeta^\intercal + I_p/(K-2))^{-1} = (K-2)I_p - \frac{K(K-2)^2}{(K-1)^2}\tilde\beta_0\tilde\beta_0^\intercal,
$$
and $\bbeta\csw = \beta_0 = \bbeta\drw$. Therefore $\lim\limits_{n\to\infty}\mathbb E(\psi(\drw) - \psi(\csw)) = 0$. When $\sigma_\beta = o(\|\beta_0\|_2)$, we have $\beta_k = \beta_0 + \eta_k$, where $\eta_k\sim N(\bzero, \sigma_\beta^2 I_p)$ and $\|\eta_k\|_2 = o_p(\|\beta_0\|_2)$. $\bbeta \drw = \beta_0 + \bar\eta$, where $\bar\eta = \sum_k\eta_k/K$. It follows that $\|\beta_k\|_2^{-1} \approx \|\beta_0\|_2^{-1}(1 - \beta_0^\intercal \eta_k/\|\beta_0\|_2^2)$. Thus
$
\tilde\beta_k = \beta_k/\|\beta_k\|_2 \approx \tilde\beta_0 + \|\beta_0\|_2^{-1}(I_p - \tilde\beta_0\tilde\beta_0^\intercal)\eta_k,
$ and 
$
\bbeta D^{-1}\bbeta^\intercal \approx K(\tilde\beta_0 + \tilde\eta)(\tilde\beta_0 + \tilde \eta)^\intercal,
$
where $\tilde\eta = \|\beta_0\|_2^{-1}(I_p - \tilde\beta_0\tilde\beta_0^\intercal) \bar\eta$. By noting that $\tilde\beta_0^\intercal \tilde\eta = 0$,
$$
(\bbeta D^{-1}\bbeta^\intercal + I_p/(K-2))^{-1} \approx (K-2)I_p - \frac{K(K-2)^2}{(K-1)^2}\left(\tilde\beta_0\tilde\beta_0^\intercal + \tilde\beta_0\tilde\eta^\intercal + \tilde\eta \tilde\beta_0^\intercal\right).
$$
Therefore, $\bbeta\csw \approx K(\tilde\beta_0\tilde\beta_0^\intercal + \tilde\beta_0\tilde\eta^\intercal + \tilde\eta\tilde\beta_0^\intercal)\bar\bbeta - (K-1)\bar\bbeta = \beta_0 + \bar\eta + \tilde\beta_0 \tilde\eta^\intercal \bar\eta + \tilde\eta\tilde\beta_0^\intercal \bar\eta$, and
\begin{align*}
	\lim\limits_{n\to\infty}\mathbb E(\psi(\drw) - \psi(\csw)) \approx& \mathbb E(\bar\eta^\intercal \bar\eta - (\bar\eta +  \tilde\beta_0 \tilde\eta^\intercal \bar\eta + \tilde\eta\tilde\beta_0^\intercal \bar\eta)^\intercal (\bar\eta +  \tilde\beta_0 \tilde\eta^\intercal \bar\eta + \tilde\eta\tilde\beta_0^\intercal \bar\eta))\\
	= & -4 \mathbb E(\bar\eta^\intercal \tilde\eta)(\tilde\beta_0^\intercal\bar\eta) - \mathbb E((\bar\eta^\intercal \tilde\eta)^2) - 3\mathbb E((\bar\eta^\intercal\tilde\eta)(\bar\eta^\intercal \tilde\beta_0)(\tilde\eta^\intercal \tilde\beta_0)) \\
	=& -\mathbb E((\bar\eta^\intercal \tilde\eta)^2).
\end{align*}
Since $\bar\eta\sim N(0,\sigma^2_\beta/KI_p)$, $|\lim\limits_{n\to\infty}\mathbb E(\psi(\drw) - \psi(\csw))| = O(\sigma_\beta^4)$. Therefore as $\sigma_\beta \downarrow 0$, $\mathbb E(\psi(\drw) - \psi(\csw)) \uparrow 0$.

When $\sigma_\beta\to \infty$, we note that $\beta_k' = \beta_k/\sigma_\beta$ converges in distribution to $N(0, I_p)$. Denote $\bbeta' = (\beta_1',\ldots,\beta_K')$. Based on the results when $\beta_0 = 0$, we know that $\lim\limits_{n\to\infty}\mathbb E(\|\bbeta'\drw\|_2^2 - \|\bbeta'\csw\|_2^2)$ converges to a positive constant as $\sigma_\beta\to\infty$. By noting that $\lim\limits_{n\to\infty}(\psi(\drw) - \psi(\csw)) = \sigma_\beta^2(\|\bbeta'\drw\|_2^2 - \|\bbeta'\csw\|_2^2)$, we have $\lim\limits_{n\to\infty}[\lim\limits_{\sigma_\beta\to\infty}\mathbb E(\psi(\drw) - \psi(\csw))] = \infty$.

\section{Derivations in Example 1 and 2}
\noindent\textbf{Example 1, Oracle}. We begin with the oracle case. Since $w\in\Delta_1$, the expected generalist utility is
$
U_g(w) = -(w_1\bar y_1 + w_2\bar y_2)^2 - 2 = -(w_1(\bar y_1 - \bar y_2) + \bar y_2)^2 - 2
$, $w_1 \in [0,1]$. If $\bar y_1 = \bar y_2$, any $w\in \Delta_1$ maximizes $U_g(w)$. If $\bar y_1 \neq \bar y_2$ and $\bar y_2/(\bar y_1 - \bar y_2)\in[0,1]$, $U_g(w)$ is maximized at $w_g = (\bar y_2/(\bar y_2 - \bar y_1), \bar y_1/(\bar y_1 - \bar y_2))$. Note $\bar y_2/(\bar y_2 - \bar y_1)\in[0,1]$ if and only if $\bar y_1\cdot \bar y_2 \leq 0$, in which case $\bar y_2/(\bar y_2 - \bar y_1) = |\bar y_2|/(|\bar y_1| + |\bar y_2|)$. If $\bar y_2/(\bar y_2 - \bar y_1)<0$, that is, $\bar y_1\cdot \bar y_2 >0$ and $|\bar y_1| > |\bar y_2|$, $w_g = (0,1)$. If $\bar y_1\cdot \bar y_2 >0$ and $|\bar y_1| < |\bar y_2|$, $\bar y_2/(\bar y_2 - \bar y_1)>1$ and $w_g = (1,0)$.\\

\noindent\textbf{Example 1, DR}. Moving to DR stacking, the estimated generalist utility is
$$
\drU(w) = -(w_1\bar y_1 + w_2\bar y_2)^2 + (\bar y_1 + \bar y_2)(w_1\bar y_1 + w_2\bar y_2) - \overline{y_1^2}/2 - \overline{y_2^2}/2,
$$
and it follows that $\drw = (1/2,1/2)$ and $\mathbb E(\drU(w) - U_g(w)) = \mathbb E((\bar y_1 + \bar y_2)(w_1\bar y_1 + w_2\bar y_2))$. We note that $\bar y_1$ and $\bar y_2$ are independent $N(0,\sigma^2 + 1/n)$. Therefore,
\begin{align*}
	\mathbb E\left(\drU(w) - U_g(w)\right) &= w_1\mathbb E(\bar y_1^2) + w_2 \mathbb E(\bar y_2^2) = \sigma^2 + 1/n,\\
	\mathbb E\left((\drU(w) - U_g(w))^2\right) &= w_1^2\mathbb E(\bar y_1^4) + w_2^2 \mathbb E(\bar y_2^4) + (1+2w_1w_2)\mathbb E(\bar y_1^2\bar y_2^2) \\
	=& (3w_1^2 + 3w_2^2 + 2w_1w_2 + 1)(\sigma^2 + 1/n)^2.
\end{align*}
Since $w\in\Delta_1$, we have $\var(\drU(w)-U_g(w)) = (2w_1^2 + 2w_2^2 + 1)(\sigma^2 + 1/n)^2$.\\

\noindent \textbf{Example 1, $\cvws$.} To show that $|\wsU(w) - \drU(w)| = O_p(1/n)$,
we fix $\mu = (\mu_1,\mu_2)^\intercal$ and it follows
\begin{align*}
	\wsS &= \drS + \frac{1}{(M-1)^2}\left[
	\begin{array}{cc}
		\frac{1}{M}\sum_m \bar y_{1,m}^2 - \bar y_1^2, & \frac{1}{M}\sum_m \bar y_{1,m}\bar y_{2,m} - \bar y_1\bar y_2 \\
		\frac{1}{M}\sum_m \bar y_{1,m}\bar y_{2,m} - \bar y_1\bar y_2, & \frac{1}{M}\sum_m \bar y_{2,m}^2 - \bar y_2^2 
	\end{array}
	\right],\text{ and }\\
	\wsb &= \drbb - \frac{1}{2(M-1)}\left[
	\begin{array}{c}
		\frac{1}{M}\sum_m \bar y_{1,m}^2 - \bar y_1^2 + \frac{1}{M}\sum_m \bar y_{1,m}\bar y_{2,m} - \bar y_1\bar y_2\\
		\frac{1}{M}\sum_m \bar y_{2,m}^2 - \bar y_2^2 + \frac{1}{M}\sum_m \bar y_{1,m}\bar y_{2,m} - \bar y_1\bar y_2
	\end{array}
	\right],
\end{align*}
where $\bar y_{k,m}$ is the average of $y_{i,k}$'s in the $m$-th fold. Note that $n(\sum_m \bar y_{k,m}^2/M - \bar y_k^2) \sim \chi^2_{M-1}$, $\mathbb E(\sum_m \bar y_{1,m}\bar y_{2,m}/M - \bar y_1\bar y_2) = 0$ and $\var(\sum_m \bar y_{1,m}\bar y_{2,m}/M - \bar y_1\bar y_2) = M^2/((M-1)n^2)$. It follows that $\sum_m \bar y_{k,m}^2/M - \bar y_k^2 = O_p(1/n)$ and $\sum_m \bar y_{1,m}\bar y_{2,m}/M - \bar y_1\bar y_2 = O_p(1/n)$.\\

\noindent \textbf{Example 1, $\cvcs$.} Note that
\begin{align*}
	\mathbb E\left( \csU(w) - U_g(w) \right) =& -w_1^2\mathbb E(\bar y_1^2) - w_2^2 \mathbb E(\bar y_2^2) = -(w_1^2 + w_2^2)(\sigma^2 + 1/n),\\
	\mathbb E\left( (\csU(w) - U_g(w))^2 \right) = & w_1^4\mathbb E(\bar y_1^4) + w_2^4 \mathbb E(\bar y_2^4) + 6w_1^2w_2^2E(\bar y_1^2\bar y_2^2) = 3(\sigma^2+1/n)^2(w_1^2 + w_2^2)^2.
\end{align*}
It follows that $\var(\csU(w;(1/2,1/2)) - U_g(w)) = 2(\sigma^2+1/n)^2(w_1^2+w_2^2)^2$. The $\text{MSE}_0$ for DR stacking and $\cvcs$ are
\begin{align*}
	\text{MSE}_0(\drw) = \frac{(\bar y_1 + \bar y_2)^2}{4} + 2,~~
	\text{MSE}_0(\csw) = \frac{(\bar y_1 + \bar y_2)^2\bar y_1^2\bar y_2^2}{(\bar y_1^2 + \bar y_2^2)^2} + 2,
\end{align*}
and
$$
\text{MSE}_0(\csw) - \text{MSE}_0(\drw) = \frac{(\bar y_1 + \bar y_2)^2(4\bar y_1^2\bar y_2^2 - (\bar y_1^2 + \bar y_2^2)^2)}{4(\bar y_1^2 + \bar y_2^2)^2} = - \frac{(\bar y_1 + \bar y_2)^2(\bar y_1^2 - \bar y_2^2)^2}{4(\bar y_1^2 + \bar y_2^2)^2}.
$$

\noindent \textbf{Example 2, $\cvcs$.} Let $X_k = (x_{1,k},\ldots,x_{n_k,k})^\intercal$, $Y_k = (y_{1,k},\ldots,y_{n_k,k})^\intercal$, and $\hat Y_{k'}^\ell(X_k) = (\hat Y^\ell_{k'}(x_{i,k});i\leq n_k)^\intercal$. We have
\begin{equation*}
	\begin{gathered}
		\drU(w) = w^\intercal \drS w - 2(\drbb )^\intercal w + K^{-1}\sum_{k}n^{-1}_k\sum_iy_{i,k}^2,\\
		\csU(w) = w^\intercal \hat \Sigma^{\text{CS}} w - 2(\hat b^{\text{CS}})^\intercal w + K^{-1}\sum_{k}n^{-1}_k\sum_iy_{i,k}^2.
	\end{gathered}
\end{equation*}
Recall that
\begin{gather*}
	\hat\Sigma^{\text{DR}}_{k,k';\ell,\ell'} = \sum_{i=1}^K \frac{\left(\hat Y_{k}^{\ell} (X_i)\right)^\intercal\hat Y_{k'}^{\ell'}(X_i)}{n_i K},~~\hat b^{\text{DR}}_{k;\ell} = \sum_{i=1}^K \frac{\left(\hat Y_{k}^{\ell} (X_i)\right)^\intercal Y_{i}}{n_iK},\\
	\hat\Sigma^{\text{CS}}_{k,k';\ell,\ell'} = \sum_{i\neq k,i\neq k'} \frac{K\left(\hat Y_{k}^{\ell} (X_i)\right)^\intercal\hat Y_{k'}^{\ell'}(X_i)}{n_i(K-1)^2},~~ \hat b_{k;\ell}^{\text{CS}} = \sum_{i\neq k} \frac{\left(\hat Y_{k}^{\ell} (X_i)\right)^\intercal Y_i}{n_i(K-1)}.
\end{gather*}
Here $\hat \Sigma^{\text{DR}}$ and $\hat \Sigma^{\text{CS}}$ are $KL\times KL$ matrices, and $w$, $\hat b^{\text{DR}}$ and $\hat b^{\text{CS}}$ are $KL$-dimensional vectors. $\hat \Sigma_{k,k';\ell,\ell'}$ is the element corresponding to $w_{k,\ell}$ and $w_{k',\ell'}$ while $\hat b_{k;\ell}$ corresponds to $w_{k,\ell}$. Since each row of $X_k$ follows $N(0, I_p)$, $X_k^\intercal X_k$ follows a Wishart distribution with $\mathbb E(X_k^\intercal X_k) = nI_p$ and $(X_k^\intercal X_k)^{-1}$ follows an inverse-Wishart distribution with $\mathbb E((X_k^\intercal X_k)^{-1}) = I_p/(n-p-1)$. Therefore,
\begin{align*}
	\mathbb E(\hat \beta_k^\intercal X_s^\intercal X_s \hat\beta_{k'}) =& \mathbb E(\beta_k^\intercal X_s^\intercal X_s \beta_{k'}) + \mathbb E(\epsilon_k^\intercal X_k(X_k^\intercal X_k)^{-1}X_s^\intercal X_s (X_{k'}^\intercal X_{k'})^{-1}X_{k'}^\intercal \epsilon_{k'})\\
	=& \left\{
	\begin{array}{cc}
		n\|\beta_0\|_2^2 + np\sigma_\beta^2 + p\sigma^2    & k = k' = s, \\
		n\|\beta_0\|_2^2 + np\sigma_\beta^2 + \frac{np\sigma^2}{n-p-1}    & k = k' \neq s,\\
		n\|\beta_0\|_2^2    & k\neq k',
	\end{array}
	\right.\\
	\mathbb E(\hat\beta_k^\intercal X_s^\intercal Y_s) =& \mathbb E(\beta_k^\intercal X_s^\intercal X_s\beta_s) + \mathbb E(\epsilon_k^\intercal X_k(X_k^\intercal X_k)^{-1}X_s^\intercal \epsilon_s) = n\|\beta_0\|_2^2 + \mathbb I(k=s)(np\sigma^2_\beta + p\sigma^2).
\end{align*}
In addition, $\mathbb E(\Sigma_{g;k,k'}) = \|\beta_0\|_2^2 + \mathbb I(k=k')(p\sigma_\beta^2 + p\sigma^2/(n-p-1))$ and $\mathbb E(b_k) = \|\beta_0\|_2^2$. It follows that $\mathbb E(\drS_{k,k'} - \Sigma_{g;k,k'}) = -\mathbb I(k = k')p(p+1)\sigma^2/(Kn(n-p-1))$. $\mathbb E(\csS_{k,k'} - \Sigma_{k,k'}) = (\|\beta_0\|_2^2 + p\sigma_\beta^2 + Kp\sigma^2/(n-p-1))/(K-1)$ if $k=k'$ and $\mathbb E(\csS_{k,k'} - \Sigma_{k,k'}) = \|\beta_0\|_2^2/(K-1)^2$ otherwise. $\mathbb E(\drbb_k - b_k) = p(\sigma^2_\beta + \sigma^2/n)/K$ and $\mathbb E(\csb_k - b_k) = 0$. As a result,
\begin{align*}
	\mathbb E\left(\drU(w) - U_g(w)\right) =& \frac{p(p+1)\sigma^2}{Kn(n-p-1)}\|w\|_2^2 + \frac{2(p\sigma^2_\beta + p\sigma^2/n)}{K}w^\intercal \bone_K,\\
	\mathbb E\left(\csU(w) - U_g(w)\right) =& \frac{\|\beta_0\|_2^2}{(K-1)^2}\sum_{k\neq k'} w_kw_{k'} - \frac{\|\beta_0\|_2^2 + p\sigma_\beta^2 + p\sigma^2/(n-p-1)}{K-1}\|w\|_2^2.
\end{align*}
$\mathbb E(\drU(w) - U_g(w))\geq 0$ for any $w\in \Delta_{K-1}$ and $\min(|\mathbb E(\drU(w) - U_g(w))|) = p\sigma^2(2+(p+1)/K/(n-p-1))/(Kn) + 2p\sigma_\beta^2/K$. $\mathbb E(\csU(w) - U_g(w))\leq 0$ for any $w\in \Delta_{K-1}$ and $\max(|\mathbb E(\csU(w) - U_g(w))|) = (\|\beta_0\|_2^2 + p\sigma_\beta^2 + p\sigma^2/(n-p-1))/(K-1)$. When $\|\beta_0\|_2 = o(\sigma_\beta)$ and $2p\leq n$, $|\mathbb E(\drU(w) - U_g(w))|\geq |\mathbb E(\csU(w) - U_g(w))|$ for any $w\in \Delta_{K-1}$. When $\sigma_\beta = o(\|\beta_0\|_2)$ and $\sigma = O(\sigma_\beta)$, $|\mathbb E(\drU(w)-U_g(w))|\leq |\mathbb E(\csU(w)-U_g(w))|$.

\section{Figure for $\mathbb E(\psi(\csw) - \psi(w_g^0))$}
\renewcommand\thefigure{\thesection.\arabic{figure}}
\setcounter{figure}{0}  
\begin{figure}[htbp]
	\centering
	\includegraphics[width=0.85\linewidth]{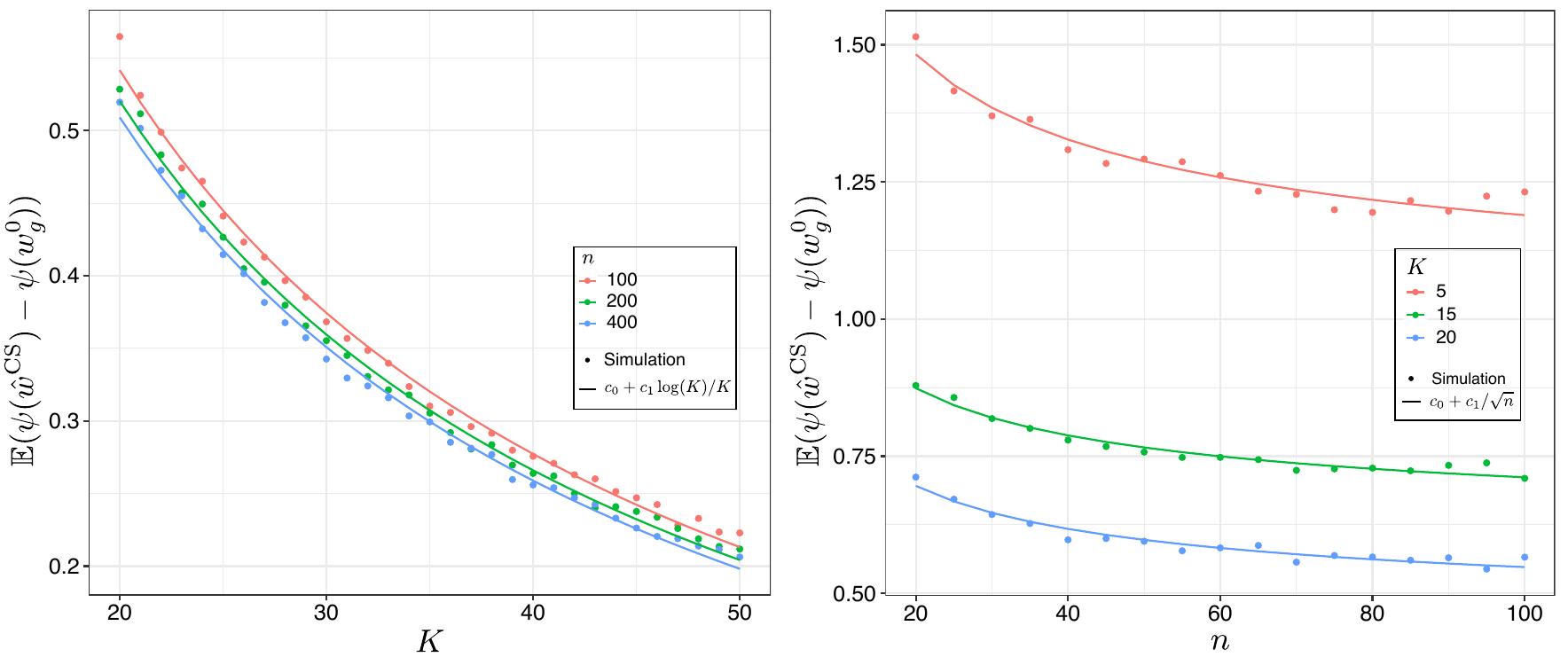}
	\caption{$\mathbb E(\psi(\csw)-\psi(w_g^0))$ as a function of $K$ and $n$. Dots represent results from the Monte Carlo simulation. Lines illustrate the fitted functions $c_0 + c_1\log(K)/K$ (\textbf{Left}) and $c_0 + c_1/\sqrt{n}$ (\textbf{Right}) to the Monte Carlo results. Simulation settings are described in Section 5.}
	\label{fig:cs-bound}
\end{figure}
\end{appendix}


\section*{Acknowledgements}
Research was supported by the U.S.A.'s National Science Foundation grant NSF-DMS1810829, the U.S.A.'s National Institutes of Health grant NCI-5P30CA006516-54, NIA-R01AG066670, NIDA-R33DA042847, NIMH-R01MH120400 and NIH-UL1TR002541 and Gunderson Legacy Fund 041662.

\bibliographystyle{chicago}
\bibliography{reference}

\end{document}